\title{Metric Properties of the Fuzzy Sphere} 
\author{Francesco D'Andrea,$^1$ Fedele Lizzi$\hspace{1pt}^{2,3,4}$
and Joseph C.~V\'arilly$^{\,5}$ \\[12pt]
{\footnotesize $^1$ Dipartimento di Matematica e Applicazioni,
Universit\`a di Napoli ``Federico II'', Italy}\\[3pt]
{\footnotesize $^2$ Dipartimento di Scienze Fisiche,
Universit\`a di Napoli ``Federico II'', Italy
}\\[3pt]
{\footnotesize $^3$ INFN Sezione di Napoli, Italy
}\\[3pt]
{\footnotesize $^4$ Departament de Estructura i Constituents
de la Mat\`eria,}\\
{\footnotesize and Institut de Ci\'encies del Cosmos,
Universitat de Barcelona, Barcelona, Catalonia, Spain}
\\[3pt]
{\footnotesize $^5$ Escuela de Matem\'atica,
Universidad de Costa Rica, San Jos\'e 11501, Costa Rica}
}
\date{} 
\DeclareMathOperator{\ad}{ad}     
\DeclareMathOperator{\End}{End}   
\DeclareMathOperator{\sign}{sign} 
\DeclareMathOperator{\Tr}{Tr}     
\DeclareMathOperator{\tsum}{{\textstyle\sum}} 
\newcommand{\al}{\alpha}          
\newcommand{\Dl}{\Delta}          
\newcommand{\eps}{\varepsilon}    
\newcommand{\Ga}{\Gamma}          
\newcommand{\ga}{\gamma}          
\newcommand{\la}{\lambda}         
\newcommand{\om}{\omega}          
\newcommand{\Sg}{\Sigma}          
\newcommand{\sg}{\sigma}          
\renewcommand{\th}{\theta}        
\newcommand{\vf}{\varphi}         
\newcommand{\sA}{\mathcal{A}}     
\newcommand{\sB}{\mathcal{B}}     
\newcommand{\sD}{\mathcal{D}}     
\newcommand{\sH}{\mathcal{H}}     
\newcommand{\sJ}{\mathcal{J}}     
\newcommand{\sL}{\mathcal{L}}     
\newcommand{\sS}{\mathcal{S}}     
\newcommand{\sU}{\mathcal{U}}     
\newcommand{\bC}{\mathbb{C}}      
\newcommand{\bN}{\mathbb{N}}      
\newcommand{\bR}{\mathbb{R}}      
\newcommand{\bS}{\mathbb{S}}      
\newcommand{\bZ}{\mathbb{Z}}      
\newcommand{\Cl}{\mathrm{C}\ell}  
\newcommand{\geo}{\mathrm{geo}}   
\newcommand{\ii}{\mathrm{i}}      
\newcommand{\nD}{\mathbf{D}}      
\newcommand{\avec}{{\vec a}}      
\newcommand{\xvec}{{\vec x}}      
\newcommand{\yvec}{{\vec y}}      
\newcommand{\DD}{\mathfrak{D}}    
\newcommand{\g}{\mathfrak{g}}     
\newcommand{\gsu}{\mathfrak{su}}  
\newcommand{\id}{\mathsf{id}}     
\newcommand{\az}{\triangleright}  
\newcommand{\Dirac}{D\mkern-11.5mu/\,} 
\newcommand{\isom}{\simeq}        
\newcommand{\hookto}{\hookrightarrow} 
\newcommand{\ovl}{\overline}      
\newcommand{\ox}{\otimes}         
\newcommand{\wh}{\widehat}        
\newcommand{\wt}{\widetilde}      
\newcommand{\x}{\times}           
\renewcommand{\.}{\cdot}          
\renewcommand{\:}{\colon}         
\newcommand{\half}{{\mathchoice{\thalf}{\thalf}{\shalf}{\shalf}}}
\newcommand{\quarter}{\tfrac{1}{4}} 
\newcommand{\pihalf}{\frac{\pi}{2}} 
\newcommand{\shalf}{{\scriptstyle\frac{1}{2}}} 
\newcommand{\thalf}{\tfrac{1}{2}} 
\newcommand{\cket}[1]{\lvert#1)}      
\newcommand{\ket}[1]{\lvert#1\rangle} 
\newcommand{\kket}[1]{\lvert#1\rangle\!\rangle} 
\newcommand{\set}[1]{\{\,#1\,\}}  
\newcommand{\word}[1]{\quad\mbox{#1}\quad} 
\newcommand{\cbraket}[2]{(#1\mathbin|#2)} 
\newcommand{\pd}[2]{\frac{\partial#1}{\partial#2}} 
\newcommand{\scal}[2]{\langle#1\mathbin|#2\rangle} 
\newcommand{\braCket}[3]{\langle#1\mathbin|#2\mathbin|#3\rangle} 
\newcommand{\cbraCket}[3]{(#1\mathbin|#2\mathbin|#3)} 
\newcommand{\twobytwo}[4]{\begin{pmatrix}#1 & #2 \\
                            #3 & #4\end{pmatrix}} 
\newcommand{\CG}[6]{C_{#1#4,#2#5}^{#3#6}} 
\theoremstyle{plain}
\newtheorem{prop}{Proposition}[section]  
\newtheorem{lemma}[prop]{Lemma}          
\newtheorem{corl}[prop]{Corollary}       
\theoremstyle{definition}
\newtheorem{defn}[prop]{Definition}      
\theoremstyle{remark}
\newtheorem{remk}[prop]{Remark}          
\numberwithin{equation}{section}
\newcommand{\hideqed}{\renewcommand{\qed}{}} 
\renewcommand{\section}{\@startsection{section}{1}{\z@}%
                        {-3.5ex \@plus -1ex \@minus -.2ex}%
                        {2.3ex \@plus.2ex}%
                        {\normalfont\large\bfseries}}
\renewcommand{\subsection}{\@startsection{subsection}{2}{\z@}%
                        {-3.25ex \@plus -1ex \@minus -.2ex}%
                        {1.5ex \@plus .2ex}%
                        {\normalfont\normalsize\bfseries}}
\begin{document}

\setlength{\droptitle}{-3pc}  
\pretitle{\begin{flushright}\small
ICCUB--12--320                
\end{flushright}
\begin{center}\LARGE}
\posttitle{\par\end{center}}

\maketitle

\vspace*{-1pc}

\begin{abstract}
The fuzzy sphere, as a quantum metric space, carries a sequence of 
metrics which we describe in detail. We show that the Bloch coherent
states, with these spectral distances, form a sequence of metric 
spaces that converge to the round sphere in the high-spin limit.
\end{abstract}

\footnotetext{%
\textit{MSC class} [2010]: Primary 46L87; secondary 81R05, 58B34, 81R30.}

\footnotetext{%
\textit{Keywords}: fuzzy sphere, quantum metric space, Dirac
operators, coherent states.}


\section{Introduction} 

It is common practice in several fields to ``approximate'' a manifold
with a finite or countable subset of points. A typical example in is
the study of quantum field theories on a lattice. One drawback is the
absence of some of the symmetries of the continuous theory it purports
to approximate (e.g., Poincar\'e symmetries in flat Minkowski space).

Take the simple example of a unit two-sphere $\bS^2$. On replacing
$\bS^2$ with a subset of $N$ points, rotational symmetry is lost. In
algebraic language: the algebra $\bC^N$ of functions on $N$ points is
not an $\sU(\gsu(2))$-module $*$-algebra. There are no nontrivial
$SU(2)$-orbits with finitely many points; to preserve the symmetries
and keep the algebra finite dimensional, one may replace the function
algebra $\bC^N$ with a noncommutative one, provided that the
noncommutativity be suppressed as $N \to \infty$. This is the idea
behind the fuzzy sphere (and more general fuzzy spaces), put forward
in~\cite{Mad92}, as well as in \cite{Strat57,Hop82,VG89}.

Let $x_1,x_2,x_3$ be Cartesian coordinates on $\bS^2$, and
$\sA(\bS^2)$ be the $*$-algebra of poly\-no\-mials in these. As an
abstract $*$-algebra, this is the complex unital commutative
$*$-algebra with three self-adjoint generators $x_1$, $x_2$, $x_3$
subject only to the relation $x_1^2 + x_2^2 + x_3^2 = 1$. As an
$\sU(\gsu(2))$-module $*$-algebra, $\sA(\bS^2)$ decomposes into a
direct sum of irreducible representations
$\sA(\bS^2) \isom \bigoplus_{\ell=0}^\infty V_\ell$. Here $V_\ell$ is
the vector space underlying the irreducible representation of
$\sU(\gsu(2))$ with highest weight $\ell \in \bN$, and is spanned by
Laplace spherical harmonics $Y_{\ell,m}$.

In the spirit of \cite{AL10,AKL11}, we introduce a cut-off in the
energy spectrum, i.e., we neglect all but the first $(N + 1)$
representations in the decomposition of $\sA(\bS^2)$. One cannot
simply take the linear span of $Y_{\ell,m}$ for $\ell = 0,1,\dots,N$,
as this is not a subalgebra of $\sA(\bS^2)$. To proceed, we write
$N = 2j$ and denote by $\pi_j \: \sU(\gsu(2)) \to M_{2j+1}(\bC)$ the
spin~$j$ representation of $\sU(\gsu(2))$; the action (using
Sweedler notation for the coproduct):
$$
h \az a := \pi_j(h_{(1)})\, a \,\pi_j(S(h_{(2)})),  \qquad
h \in \sU(\gsu(2)),\ a \in M_{2j+1}(\bC)
$$
makes the matrix algebra $\sA_N := M_{N+1}(\bC)$ an
$\sU(\gsu(2))$-module $*$-algebra. There is a decomposition into
irreducible representations:
$$
\sA_N \isom V_j \ox V_j^* \isom \bigoplus_{\ell=0}^{2j} V_\ell
$$
and a surjective homomorphism $\sA(\bS^2) \to \sA_N$ of
$\sU(\gsu(2))$-modules (but not of module algebras), given on
generators by
$$
x_k \mapsto \hat x_k := \frac{1}{\sqrt{j(j + 1)}}\, \pi_j(J_k),
$$
where the $J_k$ are the standard real generators of $\sU(\gsu(2))$. 
The map $x_k \mapsto \hat x_k$ does not extend to an
algebra morphism, but can be extended in a unique way, using
coherent-state quantization, to an isometry between
$*$-representations of $\sU(\gsu(2))$ sending the spherical harmonic
$Y_{\ell,m}$, for $\ell \leq 2j$, into a matrix $\wh Y_{\ell,m}^{(j)}$
sometimes called a ``fuzzy spherical harmonic'' (details
at the end of Sect.~\ref{ssc:Dirac-fuzzy}).
Since an infinite-dimensional vector space is mapped onto a
finite-dimensional one, 
information is lost and the space becomes ``fuzzy''.

The matrices $\hat x_k$ are normalized in such a way that the
spherical relation still holds:
$\hat x_1^2 + \hat x_2^2 + \hat x_3^2 = 1$, but their commutators are
clearly not zero~\cite{VG89}:
$$
[\hat x_k,\hat x_l]
= \frac{1}{\sqrt{j(j + 1)}}\, \ii\,\eps_{klm} \hat x_m.
$$
Since the coefficient in the commutator vanishes for
$N = 2j \to \infty$, the na\"ive idea is that the fuzzy sphere
``converges'', as $N \to \infty$, to a unit sphere. It is clear that
the notion of convergence must involve the Riemannian metric
of~$\bS^2$.

The correct mathematical framework for the convergence of matrix
algebras to algebras of functions on Riemannian manifolds (or more
generally, on metric spaces) was developed by Rieffel in a series of
seminal papers, where he introduced the notion of (compact) quantum
metric spaces and quantum Gromov--Hausdorff convergence
\cite{Rie99,Rie04,Rie04a}. The convergence of the fuzzy sphere
to~$\bS^2$ was established in~\cite{Rie04b}. However, there the
metrics are dealt with globally and the proof does not indicate how to
choose a sequence of elements approximating a given point of~$\bS^2$.
In this paper we approximate the points of~$\bS^2$ by the
corresponding (Bloch) coherent states of~$\sA_N$.

A distance $d_N$ on the state space of $\sA_N$ can be defined via a
generalized Dirac operator. Since, for any~$N$, the set of coherent
states is labelled by~$\bS^2$, this gives a distance on $\bS^2$
depending on the deformation parameter~$N$. Denoting by $d_\geo$ the
geodesic distance of the round sphere, we prove that
$$
\lim_{N\to\infty} d_N(p,q) = d_\geo(p,q), \word{for all} p,q\in \bS^2.
$$

Another noncommutative space where the distance between coherent
states has already been studied is the Moyal plane
\cite{CDMW11,MT11,Wal11}. In contrast with that example, whose
distance is independent of the deformation parameter, here the
distance depends on~$N$.

\vspace{6pt}  

Sect.~\ref{sec:NCspaces} briefly recalls the basics of noncommutative
spaces. In Sect.~\ref{sec:Dirac}, we introduce our spectral triples
for the sphere and compare them with other proposals in the
literature. In Sect.~\ref{sec:dist}, we recall the Bloch coherent
states~\cite{ACGT72} and compute some particular distances between
them. Then we prove that the spectral distance is $SU(2)$-invariant,
nondecreasing with~$N$, and converges to the geodesic distance on
$\bS^2$ when $N \to \infty$.


\section{Preliminaries on noncommutative manifolds} 
\label{sec:NCspaces}

Material in this section is mainly taken from \cite{Con94,GVF01}. In
the spirit of Connes' noncommutative geometry, manifolds are replaced
by spectral triples.
A \emph{unital spectral triple} $(\sA,\sH,D)$ has the following data:
(i) a separable complex Hilbert space $\sH$;
(ii) a complex associative involutive unital algebra $\sA$ with a
faithful unital $*$-representation $\sA \to \sB(\sH)$, the
representation symbol usually being omitted;
(iii) a self-adjoint operator $D$ on~$\sH$ with compact resolvent
such that $[D,a]$ is a bounded operator for all $a \in \sA$.

A spectral triple is \emph{even} if there is a \emph{grading}
$\ga$ on $\sH$, i.e., a bounded operator satisfying $\ga = \ga^*$ and
$\ga^2 = 1$, commuting with any $a \in\sA$ and anticommuting with~$D$.

A spectral triple is \emph{real} if there is an antilinear isometry
$J \: \sH \to \sH$ (the ``real structure''), such that $J^2 = \pm 1$,
$JD = \pm DJ$ and $J\ga = \pm\ga J$ in the even case, with the signs
related to the KO-dimension of the triple~\cite{Con95}; and
\begin{equation}
\label{eq:real-comm} 
[a,JbJ^{-1}] = 0,  \qquad  [[D,a],JbJ^{-1}] = 0,
\quad\word{for all} a,b \in \sA.
\end{equation}
This shows that $b \mapsto Jb^*J^{-1}$ is an injective homomorphism of
$\sA$ into its commutant.

For the notion of \emph{equivariant} spectral triple, we refer
to~\cite{Sit03}. A group, or more generally a Hopf algebra, acts
on~$\sA$ and on~$\sH$, intertwining the operator $D$ with itself.

\begin{remk} 
\label{rk:even}
Note that if $(\sA,\sH,D,\ga)$ is an even spectral triple and $v$ an
eigenvector of $D$ with eigenvalue $\la$, then $\ga v$ is an
eigenvector of $D$ with eigenvalue $-\la$. Thus, the eigenvalues $\la$
and $-\la$ have the same multiplicity.
\end{remk}

We use the following notations and conventions. $\sB(\sH)$ is the
algebra of all bounded linear operators on~$\sH$. The set of all
states of (the norm completion of) $\sA$ is denoted by $\sS(\sA)$.
We denote by $\|\.\|$ the operator norm of $\sB(\sH)$; by
$\|v\|_\sH^2 = \scal{v}{v}$ the norm-squared of a vector $v \in \sH$,
writing $\scal{\.}{\.}$ for scalar products. By $\Cl(\g)$ we mean the
Clifford algebra over a semisimple Lie algebra with its Killing form.

Recall that $\sS(\sA)$ is a convex set, compact in the weak$^*$ 
topology, whose extremal points are the \emph{pure states} of~$\sA$.
$\sS(\sA)$ is an extended metric space (allowing distances to be 
$+\infty$), with distance function given by
\begin{equation}
\label{eq:metric} 
d_{\sA,D}(\om,\om') := \sup_{a=a^*\in\sA}
\set{|\om(a) - \om'(a)| : \|[D,a]\| \leq 1}
\end{equation}
for all $\om,\om'\in \sS(\sA)$. This is usually called \emph{Connes'
metric} or \emph{spectral distance}~\cite{Con89}. The supremum is
usually taken over all $a \in \sA$ obeying the side condition; but it
was noted in~\cite{IKM01} that the supremum is always attained on
self-adjoint elements. More generally, when defining a metric, one can
replace $\|[D,a]\|$ by $L(a)$ where $L$ is a Leibniz seminorm
on~$\sA$. The structure $(\sA,d_{\sA,L})$ so obtained is a ``compact
quantum metric space'' \cite{Rie04,Rie10}.


\section{Dirac operators for the fuzzy sphere} 
\label{sec:Dirac}

The classical Dirac operator $\Dirac$ on a compact semisimple Lie
group $G$ with Lie algebra $\g$ can be seen as a purely algebraic
object $\DD$ living in the noncommutative Weil algebra
$U(\g) \ox \Cl(\g)$, see~\cite{Kos99,HP06}. It is equivariant in the
sense that there exists a Lie algebra homomorphism
$\g \to U(\g) \ox \Cl(\g)$ with whose range $\DD$ commutes. The spinor
bundle of $G$ is parallelizable: $L^2(G,S) \isom L^2(G) \ox \Sg$,
where $\Sg$ is an irreducible $ \Cl(\g)$-module. The algebra
$U(\g) \ox \Cl(\g)$ acts on the Hilbert space $L^2(G) \ox \Sg$ making
$\DD$ into the ``concrete'' Dirac operator $\Dirac$ of $G$, an
unbounded first-order elliptic operator. Using the injection
$\g \hookto \Cl(\g)$ we can also think of $\DD$ as an element of
$U(\g) \ox U(\g)$, equivariant in the sense that it commutes with the
range of the coproduct $\Delta$ in $U(\g) \ox U(\g)$. On a compact
Riemannian symmetric space $G/U$, this construction also applies
(indeed, it works on~$G$ as a symmetric space of $G \x G$), although
the spinor bundle is not always parallelizable. This is the point of
view that we shall adopt for the fuzzy sphere.

\subsection{An abstract Dirac operator} 
\label{ssc:Dabs-S2}

We begin with the two-sphere $\bS^2$. 
The abstract \textit{Dirac element}
$\DD \in \sU(\gsu(2)) \ox \sU(\gsu(2))$ is defined as
\begin{equation}
\label{eq:D-elt} 
\DD := 1 \ox 1 + 2 \tsum_k J_k \ox J_k \,.
\end{equation}
Since $\sum_k [J_k \ox J_k, J_l \ox 1 + 1 \ox J_l] = 0$, this element
commutes with the range of the coproduct
$\Dl \: \sU(\gsu(2)) \to \sU(\gsu(2)) \ox \sU(\gsu(2))$. That is an
equivariance property of~$\DD$.

The corresponding element of $\sU(\gsu(2)) \ox \Cl_{20}$ is
\begin{equation}
\label{eq:abs-D} 
\DD_S := (\id \ox \pi_\half)(\DD) = 1 \ox 1 + \tsum_k J_k \ox \sg_k
= \twobytwo{1 + H}{F}{E}{1 - H}.
\end{equation}
where $H = J_3$, $E = J_1 + \ii J_2$, $F = E^*$. The square of~$\DD$
is $\DD_S^2 = C_{SU(2)} + \quarter(1 \ox 1)$, where
$C_{SU(2)} := \sum_k (J_k \ox 1 + 1 \ox \half\sg_k)^2$ is the Casimir
operator and $1/4 = R/8$ is the scalar curvature term ($R = 2$ being
the scalar curvature of~$\bS^2$). This is the symmetric space version,
$D^2 = C_G + R/8$, of the Schr\"odinger--Lichnerowicz formula for
equivariant Dirac operators \cite[p.~87]{Fri00}.

\begin{lemma} 
\label{lm:eigenD}
For any $\ell \neq 0$ in $\half\bN$, the operator
$(\pi_\ell \ox \pi_\half)(\DD^2)$ has eigenvalues $\ell^2$ with
multiplicity $2\ell$ and $(\ell + 1)^2$ with multiplicity $2\ell + 2$.
For $\ell = 0$, $(\pi_0 \ox \pi_\half)(\DD^2)$ has eigenvalue~$1$ with
multiplicity~$2$.
\end{lemma}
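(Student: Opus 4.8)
The plan is to diagonalize $(\pi_\ell \ox \pi_\half)(\DD)$ directly on each irreducible summand of the Clebsch--Gordan decomposition of $V_\ell \ox V_\half$, and then square the eigenvalues. The first move is to rewrite the interaction term $2\tsum_k J_k \ox J_k$ in terms of Casimir operators. Writing $C := \tsum_k (J_k \ox 1 + 1 \ox J_k)^2$ for the total Casimir of the tensor-product representation and expanding the square, one finds $2\tsum_k J_k \ox J_k = C - (\tsum_k J_k^2)\ox 1 - 1\ox(\tsum_k J_k^2)$, so that $\DD = 1\ox 1 + C - (\tsum_k J_k^2)\ox 1 - 1\ox(\tsum_k J_k^2)$ is expressed entirely through operators that act as scalars on irreducibles.

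Next I would invoke the decomposition $V_\ell \ox V_\half \isom V_{\ell+1/2}\oplus V_{\ell-1/2}$, valid for $\ell\geq\half$. Under $\pi_\ell\ox\pi_\half$ the two single-factor Casimirs act as the scalars $\ell(\ell+1)$ and $\half(\half+1)=\tfrac34$, while the total Casimir $C$ acts as $(\ell+\half)(\ell+\tfrac32)$ on $V_{\ell+1/2}$ and as $(\ell-\half)(\ell+\half)$ on $V_{\ell-1/2}$. Substituting these scalars into the formula above, the interaction term $2\tsum_k J_k\ox J_k$ evaluates to $+\ell$ on $V_{\ell+1/2}$ and to $-(\ell+1)$ on $V_{\ell-1/2}$. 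Hence $(\pi_\ell\ox\pi_\half)(\DD)$ has eigenvalue $\ell+1$ on $V_{\ell+1/2}$ and $-\ell$ on $V_{\ell-1/2}$; squaring gives $(\ell+1)^2$ and $\ell^2$. The multiplicities are simply the dimensions of the summands, namely $\dim V_{\ell+1/2}=2\ell+2$ and $\dim V_{\ell-1/2}=2\ell$, which is exactly the claim.

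Finally I would dispatch the degenerate case $\ell=0$ separately, since there $V_{\ell-1/2}$ does not occur: here $V_0\ox V_\half\isom V_\half$ is irreducible of dimension~$2$, the total Casimir equals $\tfrac34$, and the interaction term vanishes, so $(\pi_0\ox\pi_\half)(\DD)=1$ and $(\pi_0\ox\pi_\half)(\DD^2)$ has the single eigenvalue~$1$ with multiplicity~$2$, as stated.

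I expect no real obstacle: the argument is a direct Casimir computation, and the formula $\DD_S^2 = C_{SU(2)} + \quarter(1\ox 1)$ recorded just before the lemma is essentially this same identity packaged differently. The only point requiring a moment's care is the boundary behaviour --- treating $\ell=0$ on its own, and checking that the formula $\dim V_{\ell-1/2}=2\ell$ correctly specializes at $\ell=\half$ to the one-dimensional summand $V_0$ --- after which the multiplicity bookkeeping matches the stated eigenvalue structure.
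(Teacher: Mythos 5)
Your proof is correct. It rests on the same two pillars as the paper's own argument: the Clebsch--Gordan decomposition $V_\ell \ox V_\half \isom V_{\ell+\half} \oplus V_{\ell-\half}$ and the fact that Casimir elements act as scalars on irreducible summands (Schur's lemma). The difference is where the Casimir identity is applied. The paper stays at the level of the \emph{square}: it observes that $\DD_S^2 = (\id \ox \pi_\half)\Dl(J^2 + \quarter)$, so that $(\pi_\ell \ox \pi_\half)(\DD^2)$ is the image of $J^2 + \quarter$ under the tensor-product representation, and reads off the eigenvalues $(\ell \pm \half)(\ell \pm \half + 1) + \quarter$ on the two summands. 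You instead linearize: writing $2\tsum_k J_k \ox J_k = \Dl(J^2) - J^2 \ox 1 - 1 \ox J^2$, you diagonalize $(\pi_\ell \ox \pi_\half)(\DD)$ itself, finding eigenvalue $\ell + 1$ on $V_{\ell+\half}$ and $-\ell$ on $V_{\ell-\half}$, and only square at the very end. Your route is a couple of lines longer but yields strictly more information: the \emph{signed} spectrum of $(\pi_\ell \ox \pi_\half)(\DD)$, which exhibits the spectral asymmetry directly and in effect gives Proposition~\ref{pr:DN-irred}(ii) (eigenvalues $j+1$ and $-j$ of $D_N$, with multiplicities $2j+2$ and $2j$) without the explicit eigenvector computation~\eqref{eq:spinors} that the paper carries out there; the paper's route is marginally shorter when one only wants the statement about $\DD^2$. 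Your treatment of the boundary cases --- handling $\ell = 0$ separately, and checking that $\dim V_{\ell-\half} = 2\ell$ degenerates correctly to the one-dimensional $V_0$ at $\ell = \half$ --- is also sound.
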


\begin{proof}
With $J^2 = \sum_k J_k^2$, it follows from
$\Dl(J^2) = \sum_k \Dl(J_k)^2 = \sum_k (J_k \ox 1 + 1 \ox J_k)^2$ that
$C_{SU(2)} = (\id \ox \pi_\half) \Dl(J^2)$. Since $\Dl(1) = 1 \ox 1$,
this yields $\DD_S^2 = (\id \ox \pi_\half)\Dl(J^2 + \quarter)$.
Therefore,
$$
(\pi_\ell \ox \pi_\half)(\DD^2) = (\pi_\ell \ox \id)(\DD_S^2)
= (\pi_\ell \ox \pi_\half)\Dl(J^2 + \quarter).
$$
Now $(\pi_\ell \ox \pi_\half)\Dl$ is the Hopf tensor product of the
representations $\pi_\ell$ and $\pi_\half$. From
\begin{equation}
\label{eq:CGhalf} 
V_\ell \ox V_\half \isom  V_{\ell+\half} \oplus V_{\ell-\half}
\end{equation}
it follows that $(\pi_\ell \ox \pi_\half)(\DD^2)$ is unitarily
equivalent to $\pi_{\ell+\half}(J^2 + \quarter)
\oplus \pi_{\ell-\half}(J^2 + \quarter)$, and hence has eigenvalues
$$
(\ell \pm \half)(\ell \pm \half + 1) + \quarter = \begin{cases}
(\ell + 1)^2 & \text{on } V_{\ell+\half}\,, \\
\ell^2 & \text{on } V_{\ell-\half}\,.
\end{cases}
$$
If $\ell = 0$, the summand $V_{\ell-\half}$ in \eqref{eq:CGhalf} is
missing, so the only eigenvalue is~$1$ on~$V_\half$.
\end{proof}

\subsection{The Dirac operator of $\bS^2$} 
\label{ssc:Dirac-S2}

The natural representation of $\sU(\gsu(2))$ on $\bS^2$ as vector
fields yields the Dirac operator $\Dirac$ of the unit sphere (with
round metric). The spinor bundle $S \to \bS^2$ is trivial of 
rank~$2$, so the spinor space is
$L^2(\bS^2,S) \isom L^2(\bS^2) \ox \bC^2$.

Modulo the identification
$L^2(\bS^2) \isom \bigoplus_{\ell\in\bN} V_\ell$, the operator
$\Dirac$ is given by
$$
\Dirac = \bigoplus_{\ell\in\bN} (\pi_\ell \ox \pi_\half)(\DD),
$$
with $\DD$ as in~\eqref{eq:D-elt}. It follows from
Lemma~\ref{lm:eigenD} that $\Dirac^2$ has eigenvalues
$\la_\ell = \ell^2$ with multiplicity $m_\ell = 4\ell$, for every
integer $\ell \geq 1$. The spectral triple of $\bS^2$ is even, using
the grading that exchanges the two half-spinor line
bundles~\cite{GVF01}. From Remark~\ref{rk:even} it follows that
$\Dirac$ has eigenvalues $\pm\ell$ with multiplicities
$\half m_\ell = 2\ell$.

\subsection{Dirac operators on the fuzzy sphere} 
\label{ssc:Dirac-fuzzy}

We require an equivariant Dirac operator whose spectrum is that
of~$\Dirac$, truncated at $\ell = N + 1$. Let $N = 2j \geq 1$ be a
fixed integer. The \textit{fuzzy sphere} (labelled by~$N$) is the
``noncommutative $SU(2)$ coset space'' described by the algebra
$\sA_N := M_{N+1}(\bC)$ with the $SU(2)$ left action
$(g,a) \mapsto a^g := \pi_j(g)\,a\,\pi_j(g)^*$, for $g \in SU(2)$,
$a \in \sA_N$.

\begin{defn} 
\label{df:DN-irred}
The \emph{irreducible} spectral triple on $\sA_N$, that we denote by
$(\sA_N,\sH_N,D_N)$, is given by $\sH_N := V_j \ox \bC^2$, with the
natural representation of $\sA_N$ via row-by-column multiplication on
the factor $V_j \isom \bC^{N+1}$, and
$D_N := (\pi_j \ox \pi_\half)(\DD)$, where $\DD$ is the abstract Dirac
element in~\eqref{eq:D-elt}.
\end{defn}

\begin{prop} 
\label{pr:DN-irred}
The irreducible spectral triple on $\sA_N = \sA_{2j}$ has these
properties:
\begin{enumerate}
\item[\textup{(i)}]
It is equivariant with respect to the $SU(2)$ representation
$\pi_j \ox \pi_\half\,$.
\item[\textup{(ii)}]
$D_N$ has eigenvalues $j + 1$ and~$(-j)$, with respective
multiplicities $2j + 2$ and~$2j$.
\item[\textup{(iii)}]
No grading or real structure is compatible with this spectral triple.
\end{enumerate}
\end{prop}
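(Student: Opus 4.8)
The plan is to treat the three parts in turn, reducing each to the representation-theoretic data assembled in Lemma~\ref{lm:eigenD} and in the discussion following~\eqref{eq:D-elt}.

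For part~(i), I would first record that $\sA_N$ acts on $\sH_N = V_j\ox\bC^2$ by $a\mapsto a\ox 1$, so that conjugation by $U_g := \pi_j(g)\ox\pi_\half(g)$ sends $a\ox 1$ to $\pi_j(g)\,a\,\pi_j(g)^*\ox 1 = a^g\ox 1$; this is exactly covariance of the representation under the left action $a\mapsto a^g$. It then remains to check $[D_N,U_g]=0$. Since $D_N=(\pi_j\ox\pi_\half)(\DD)$ and $\DD$ commutes with the range of the coproduct $\Dl$ (the equivariance property noted after~\eqref{eq:D-elt}), applying $\pi_j\ox\pi_\half$ to $[\DD,\Dl(h)]=0$ gives $[D_N,(\pi_j\ox\pi_\half)\Dl(h)]=0$ for all $h$; as $(\pi_j\ox\pi_\half)\Dl$ is precisely the $\sU(\gsu(2))$-action on $V_j\ox V_\half=\sH_N$, exponentiating yields $[D_N,U_g]=0$.

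For part~(ii), Lemma~\ref{lm:eigenD} with $\ell=j$ shows that $D_N^2=(\pi_j\ox\pi_\half)(\DD^2)$ has eigenvalue $j^2$ on the summand $V_{j-\half}$ and $(j+1)^2$ on $V_{j+\half}$ of the decomposition~\eqref{eq:CGhalf}. By part~(i) and Schur's lemma, $D_N$ is a scalar on each irreducible summand, so these scalars are $\pm j$ and $\pm(j+1)$, and the only task is to fix the signs. I would do this by writing $D_N = 1 + 2\tsum_k \pi_j(J_k)\ox\pi_\half(J_k) = 1 + C_{SU(2)} - j(j+1) - \tfrac34$ on $\sH_N$, using that $C_{SU(2)}$ acts as $(j\pm\half)(j\pm\half+1)$ on $V_{j\pm\half}$. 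A one-line evaluation then gives $D_N=j+1$ on $V_{j+\half}$ and $D_N=-j$ on $V_{j-\half}$, with multiplicities $\dim V_{j+\half}=2j+2$ and $\dim V_{j-\half}=2j$.

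For part~(iii), the non-existence of a grading is immediate: a grading $\ga$ would anticommute with $D_N$, forcing (Remark~\ref{rk:even}) the spectrum of $D_N$ to be symmetric about~$0$; but by~(ii) the eigenvalue $j+1$ occurs while $-(j+1)$ does not (as $j\ge\half$), a contradiction. For the real structure I would exploit the commutant. The representation $a\mapsto a\ox 1$ of $\sA_N=M_{N+1}(\bC)$ on $\sH_N$ has commutant $1\ox M_2(\bC)$, of complex dimension~$4$. A real structure $J$ would make $b\mapsto Jb^*J^{-1}$ an injective complex-linear map of $\sA_N$, of dimension $(N+1)^2$, into this commutant; hence $(N+1)^2\le 4$, which already rules out every $N\ge 2$. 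The remaining, and most delicate, case is $N=1$, where the dimension count is inconclusive: there $b\mapsto Jb^*J^{-1}$ is forced to be a linear isomorphism onto all of $1\ox M_2(\bC)$, so $JbJ^{-1}$ ranges over all $1\ox c$ with $c\in M_2(\bC)$. Writing $D_1=\tfrac12(1\ox 1)+P$ with $P$ the flip on $\bC^2\ox\bC^2$, one computes $[D_1,a\ox 1]=\bigl((1\ox a)-(a\ox 1)\bigr)P$, and the first-order condition $[[D_1,a\ox 1],1\ox c]=0$ of~\eqref{eq:real-comm} reduces to $(1\ox a-a\ox1)(c\ox 1)=(1\ox c)(1\ox a-a\ox 1)$ for all $a,c$. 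Testing $a=c=e_{12}$ gives $e_{12}\ox e_{12}=-e_{12}\ox e_{12}$, a contradiction; thus no real structure survives even for $N=1$. I expect this explicit $N=1$ verification to be the main obstacle, as it is the one point where the clean dimension and spectrum arguments give out and the first-order condition must be checked by hand.
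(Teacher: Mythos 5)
Your proof is correct, and on parts (ii) and (iii) it takes a genuinely different route from the paper's. Part (i) matches the paper: covariance of $a \mapsto a \ox 1_2$ under conjugation by $\pi_j(g) \ox \pi_\half(g)$, plus $[D_N, \pi_j(g) \ox \pi_\half(g)] = 0$ from the commutation of $\DD$ with the range of the coproduct. For part (ii), the paper does not argue via Schur's lemma and the Casimir: it writes down the explicit orthonormal eigenspinors $\kket{j,m}_\pm$ of \eqref{eq:spinors} and verifies $D_N \kket{j,m}_+ = (j+1)\kket{j,m}_+$ and $D_N \kket{j,m}_- = -j\,\kket{j,m}_-$ by direct computation. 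Your identity $D_N = 1 + C_{SU(2)} - j(j+1) - \tfrac34$ fixes the signs with less work; what the paper's longer route buys is the eigenspinors themselves, which are reused later in the paper (they define the isometries $U_j^\pm$ of Sect.~\ref{ssc:N-depend}), so the extra computation is not wasted there. For part (iii), the grading argument (spectral asymmetry plus Remark~\ref{rk:even}) is identical in both. For the real structure, however, the paper relies solely on a commutant dimension count, asserting $\dim \sA_N' = 2 < (N+1)^2$; but the commutant of $a \mapsto a \ox 1_2$ (two copies of the defining representation of $M_{N+1}(\bC)$) is $1 \ox M_2(\bC)$, of complex dimension $4$, exactly as you compute. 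With the correct dimension, the count excludes only $N \geq 2$, and the case $N = 1$ --- where $b \mapsto Jb^*J^{-1}$ can be, and then must be, a bijection onto the full commutant --- genuinely requires the first-order condition of \eqref{eq:real-comm}. Your computation there ($D_1 = \half(1 \ox 1) + P$ with $P$ the flip, $[D_1, a \ox 1] = \bigl((1 \ox a) - (a \ox 1)\bigr)P$, and the test $a = c = e_{12}$ forcing $e_{12} \ox e_{12} = -e_{12} \ox e_{12}$) is correct. So your proof not only reaches the same conclusion but, on this last point, is more careful than the paper's own argument and closes a gap that the published dimension count leaves open at $N = 1$.
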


\begin{proof}
Equivariance comes from the commuting of $\DD_S$ with the range of the
coproduct, so that $D_N$ commutes with the representation
$\pi_j \ox \pi_\half$ of $\sU(\gsu(2))$ ---or the corresponding
representation of $SU(2)$--- and from the intertwining relation:
$$
(\pi_j \ox \pi_\half)(g)\,(a \ox 1)\,(\pi_j \ox \pi_\half)(g)^*
= \pi_j(g)\,a\,\pi_j(g)^* \ox \pi_\half(g)\pi_\half(g)^* = a^g \ox 1,
\quad a \in \sA_N \,.
$$

{}From Lemma~\ref{lm:eigenD} follows that $D_N^2$ has eigenvalues
$j^2$ and $(j + 1)^2$. However, the spectrum of $D_N$ is \emph{not}
symmetric about~$0$. Indeed, an explicit computation shows that
$\sH_N$ has the following orthonormal basis of eigenvectors for~$D_N$:
\begin{align}
\kket{j,m}_+
&:= \sqrt{\tfrac{j+m+1}{2j+1}}\, \ket{j,m} \ox \tbinom{1}{0}
+ \sqrt{\tfrac{j-m}{2j+1}}\, \ket{j,m+1} \ox \tbinom{0}{1},
& m &= -j - 1, \dots, j \,;
\nonumber \\[2\jot]
\kket{j,m}_-
&:= - \sqrt{\tfrac{j-m}{2j+1}}\, \ket{j,m} \ox \tbinom{1}{0}
+ \sqrt{\tfrac{j+m+1}{2j+1}}\, \ket{j,m+1} \ox \tbinom{0}{1}, 
& m &= -j, \dots, j - 1 \,.
\label{eq:spinors} 
\end{align}
One easily checks that
$$
D_N \ket{j,m}_+ = (j + 1) \ket{j,m}_+ \,,  \qquad
D_N \ket{j,m}_- = -j \ket{j,m}_- \,.
$$
Therefore, $D_N$ has eigenvalue $j + 1$ with multiplicity $2j + 2$,
and eigenvalue $-j$ with multiplicity $2j$, as claimed. This asymmetry
of the spectrum of $D_N$ and Remark~\ref{rk:even} rule out any grading
for this spectral triple.

If there were a real structure, the commutant $\sA'_N$ of $\sA_N$
would contain $J \sA_N J^{-1}$, whose dimension is $(N + 1)^2 \geq 4$.
But $\dim \sA'_N = 2$; hence, no real structure can exist.
\end{proof}

\begin{defn} 
\label{df:DN-full}
The \emph{full} spectral triple on $\sA_N$, that we denote by
$(\sA_N, \wt\sH_N, \wt\sD_N, \wt\sJ_N)$, is given by
$\wt\sH_N \isom \sA_N \ox \bC^2$, where the first factor carries the
left regular representation of $\sA_N$, i.e., the GNS representation
associated to the matrix trace; and the Dirac operator and real
structure are defined by:
\begin{align*}
\wt\sD_N(a \ox v) &:= a \ox v + \tsum_k [\pi_j(J_k), a] \ox \sg_k v,
\\
\wt\sJ_N(a \ox v) &:= a^* \ox \sg_2 \bar v,
\end{align*}
for any $a \in \sA_N$ and $v \in \bC^2$ (a column vector). For
$v = (v_1,v_2)^t \in \bC^2$, $\bar v := (v_1^*,v_2^*)^t$ is again a
column vector.
\end{defn}

The nuance between $D_N$ and $\wt\sD_N$ is that $\pi_j$ is replaced by
its adjoint action on the space
$\sA_N = \End(V_j) \isom V_j \ox V_j^*$.

\begin{prop} 
\label{pr:DN-full}
The full spectral triple on $\sA_N$ has the following properties:
\begin{enumerate}
\item[\textup{(i)}]
It is a real spectral triple.
\item[\textup{(ii)}]
It is equivariant with respect to the $SU(2)$ representation given by
the product of the action $a \mapsto a^g$ on~$\sA_N$ and the
spin-$\half$ representation.
\item[\textup{(iii)}]
$\wt\sD_N$ has integer eigenvalues $\pm\ell$ with multiplicity
$2\ell$, for every $\ell = 1,\dots,N$; and eigenvalue $N + 1$ with
multiplicity $2N + 2$.
\item[\textup{(iv)}] 
This spectral triple carries no grading.
\end{enumerate}
\end{prop}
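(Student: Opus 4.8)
The plan is to exploit the structural remark following Definition~\ref{df:DN-full}. Writing $X_k := \pi_j(J_k)$ (self-adjoint, since the $\hat x_k$ are) and letting $\rho$ denote the \emph{adjoint} representation of $\sU(\gsu(2))$ on $\sA_N$, given on generators by $\rho(J_k)a = [X_k,a]$, one checks at once that
$$
\wt\sD_N = (\rho \ox \pi_\half)(\DD),
$$
because $\rho(1) = \id$, $\pi_\half(J_k) = \half\sg_k$, and the factor~$2$ in~\eqref{eq:D-elt} cancels the~$\half$. The adjoint representation decomposes as $\sA_N \isom V_j \ox V_j^* \isom \bigoplus_{\ell=0}^{N} V_\ell$ (the decomposition recalled in the Introduction), so that $\wt\sD_N$ is block-diagonal, acting as $(\pi_\ell \ox \pi_\half)(\DD)$ on each $V_\ell \ox \bC^2$. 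This single observation reduces all four claims to the irreducible blocks already analysed in Lemma~\ref{lm:eigenD} and in the proof of Proposition~\ref{pr:DN-irred}.

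For~(iii), on each block $V_\ell \ox \bC^2 \isom V_{\ell+\half} \oplus V_{\ell-\half}$ I would pin down the eigenvalue of $\DD$ itself (not just of $\DD^2$) by rewriting $2\tsum_k J_k\ox J_k = \Dl(J^2) - J^2\ox 1 - 1\ox J^2$; evaluating on $V_{\ell\pm\half}\subset V_\ell\ox V_\half$ via~\eqref{eq:CGhalf} shows that $\DD$ is the scalar $\ell+1$ on $V_{\ell+\half}$ and $-\ell$ on $V_{\ell-\half}$, matching the sign assignment found explicitly for $\ell=j$ in the proof of Proposition~\ref{pr:DN-irred}. Hence each $V_\ell$ contributes eigenvalue $\ell+1$ with multiplicity $2\ell+2$ and, for $\ell\ge1$, eigenvalue $-\ell$ with multiplicity $2\ell$. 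Collecting over $\ell=0,\dots,N$: the value $-m$ arises only from the block $V_m$ (multiplicity $2m$, $1\le m\le N$), while $+m$ arises only from $V_{m-1}$ (multiplicity $2m$), and the top value $N+1$ comes from $V_N$ with multiplicity $2N+2$ and has no negative counterpart. This gives the stated spectrum and, at the same stroke, proves~(iv): since $N+1$ occurs but $-(N+1)$ does not, the spectrum is asymmetric about~$0$, so by Remark~\ref{rk:even} no grading can anticommute with $\wt\sD_N$.

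For~(ii), equivariance is immediate from the reformulation. Setting $U_g := \rho(g)\ox\pi_\half(g)$ with $\rho(g)a = \pi_j(g)\,a\,\pi_j(g)^* = a^g$, the infinitesimal generators of $U_g$ are $(\rho\ox\pi_\half)(\Dl(J_k))$; since $\DD$ commutes with the range of the coproduct (the equivariance noted after~\eqref{eq:D-elt}), $\wt\sD_N=(\rho\ox\pi_\half)(\DD)$ commutes with each generator, hence with every $U_g$. I would also record that $U_g$ implements the algebra action, $U_g\,\bigl((a\xi)\bigr)=a^g\,(U_g\xi)$, by the same cancellation of $\pi_j(g)\pi_j(g)^*$ used in part~(i) of Proposition~\ref{pr:DN-irred}.

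The remaining part~(i) carries the real calculation, and I expect the sign-bookkeeping there to be the main obstacle. I would check in turn: that $\wt\sJ_N$ is an antilinear isometry for the GNS inner product $\scal{a\ox v}{b\ox w}=\Tr(a^*b)\,\scal{v}{w}$, using $\Tr(ab^*)=\ovl{\Tr(a^*b)}$ and unitarity of $\sg_2$; that $\wt\sJ_N^2=-1$, from $\sg_2\ovl{\sg_2}=-\sg_2^2=-1$; and that $\wt\sJ_N\wt\sD_N=\wt\sD_N\wt\sJ_N$, which reduces to the Pauli identities $-\sg_2\ovl{\sg_k}=\sg_k\sg_2$ ($k=1,2,3$) together with $[X_k,a]^*=-[X_k,a^*]$. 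For the order-zero and order-one conditions~\eqref{eq:real-comm}, the key computation is that $\wt\sJ_N b\,\wt\sJ_N^{-1}$ acts as \emph{right} multiplication by $b^*$ on the first factor, $\wt\sJ_N b\,\wt\sJ_N^{-1}(c\ox v)=cb^*\ox v$ (using $\wt\sJ_N^{-1}=-\wt\sJ_N$). Since the left action of $a$ and the commutator $[\wt\sD_N,a]$, which satisfies $[\wt\sD_N,a](c\ox v)=\tsum_k [X_k,a]\,c\ox\sg_k v$, both act by \emph{left} multiplication on the first factor, they commute with right multiplication, so both commutators in~\eqref{eq:real-comm} vanish. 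The resulting signs $\wt\sJ_N^2=-1$ and $\wt\sJ_N\wt\sD_N=+\wt\sD_N\wt\sJ_N$ are exactly those of KO-dimension~$2$, as expected for~$\bS^2$.
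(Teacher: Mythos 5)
Your proof is correct and follows essentially the same route as the paper's: the key reduction $\wt\sD_N = (\ad\pi_j \ox \pi_\half)(\DD) \isom \bigoplus_{\ell=0}^{N} (\pi_\ell \ox \pi_\half)(\DD)$ via the Clebsch--Gordan decomposition of $\sA_N$, the identical antiunitarity/sign/commutant checks for the real structure in (i), equivariance from the coproduct in (ii), and spectral asymmetry plus Remark~\ref{rk:even} for (iv). The one variation is in (iii): where the paper reads off the block eigenvalues $\ell+1$ and $-\ell$ by citing Prop.~\ref{pr:DN-irred}(ii) (proved there with the explicit eigenspinors~\eqref{eq:spinors}), you derive them algebraically from $2\tsum_k J_k \ox J_k = \Dl(J^2) - J^2 \ox 1 - 1 \ox J^2$ evaluated on $V_{\ell\pm\half}$, a correct and self-contained computation that fixes the signs without invoking the explicit basis.
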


\begin{proof}
Clearly $\wt\sJ_N$ is antilinear, indeed antiunitary, since
\begin{align*}
\scal{\wt\sJ_N(a \ox v)}{\wt\sJ_N(b \ox w)}
&= \Tr(a^*b) \scal{\sg_2\bar v}{\sg_2\bar w}
= \Tr(a^*b) \scal{\bar v}{\bar w}
\\
&= \ovl{\Tr(b^*a)}\,\,\ovl{\scal{w}{v}}
= \ovl{\scal{b \ox w}{a \ox v}} \,.
\end{align*}
We need to check the conditions~\eqref{eq:real-comm}. The equality
$\bar\sg_2 = - \sg_2$ shows that $(\wt\sJ_N)^2 = -1$. Using
$\wt\sJ_N^{-1} = - \wt\sJ_N$, we find that
$$
\wt\sJ_N\,b\,\wt\sJ_N^{-1}(a \ox v) = -\wt\sJ_N(ba^* \ox \sg_2\bar v)
= ab^* \ox v,  \word{for all}  a,b \in \sA_N, \ v \in \bC^2.
$$
Since left and right multiplication on $\sA_N$ commute,
$\wt\sJ_N\,b\,\wt\sJ_N^{-1}$ lies in the commutant of
$\sA_N \ox M_2(\bC)$, and both conditions in~\eqref{eq:real-comm} are
satisfied.

Since $\sg_2 \bar\sg_k = - \sg_k\sg_2$ for $k = 1,2,3$, and 
$[\pi_j(J_k), a]^* = -[\pi_j(J_k), a^*]$, we obtain
$$
\wt\sJ_N \wt\sD_N (a \ox v)
= a^* \ox \sg_2 v - \tsum_k [\pi_j(J_k), a^*] \ox \sg_2 \bar\sg_k v 
= \wt\sD_N \wt\sJ_N (a \ox v),
$$
for any $a \in \sA_N$ and $v \in \bC^2$. Hence
$\wt\sJ_N \wt\sD_N = \wt\sD_N \wt\sJ_N$.

Equivariance follows again from the commuting of $\DD_S$ with the
range of the coproduct, since the representation
$J_k \mapsto [\pi_j(J_k), \cdot]$ is the derivative of the adjoint
action $a \mapsto a^g = \pi_j(g)\,a\,\pi_j(g)^*$ of~$SU(2)$.

Writing $\ad\pi_j(h) \: a \mapsto \pi_j(h_{(1)})\,a\,\pi_j(S(h_{(2)}))$
for $h \in \sU(\gsu(2))$ and $a \in \sA_N$, we see that
$$
\wt\sD_N = (\ad\pi_j \ox \pi_\half)(\DD),
$$
In view of the unitary $\sU(\gsu(2))$-module isomorphism
$$
\sA_N \isom  V_j \ox V_j^* \isom \bigoplus_{\ell=0}^{2j} V_\ell \,,
$$
$\wt\sD_N$ is unitarily equivalent to the operator
$\bigoplus_{\ell=0}^{2j}(\pi_\ell \ox \pi_\half)(\DD)$.
Replacing $N = 2j$ by~$2\ell$ in Prop.~\ref{pr:DN-irred}(ii), we see
that $(\pi_\ell \ox \pi_\half)(\DD)$ has eigenvalues $\ell + 1$
and~$(-\ell)$, with respective multiplicities $2\ell + 2$ and~$2\ell$
(but if $\ell = 0$ the eigenvalue $-\ell$ is missing). Hence
$\wt\sD_N$ has the eigenvalues $\pm\ell$, each with multiplicity
$2\ell$ for $\ell = 1,\dots,N$; and $N + 1$ with multiplicity
$2N + 2$.

Lastly, since the spectrum of $\wt\sD_N$ is not symmetric about~$0$,
there can exist no grading for this spectral triple.
\end{proof}

\begin{prop} 
\label{pr:same-metric}
The irreducible and full spectral triples induce the same metric
on the state space $\sS(\sA_N)$ of the fuzzy sphere.
\end{prop}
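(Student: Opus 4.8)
The two triples share the same algebra $\sA_N$, hence the same state space $\sS(\sA_N)$, and the supremum in~\eqref{eq:metric} ranges over the same set of self-adjoint $a\in\sA_N$ in both cases. The only ingredient of the metric that can differ is the side condition, i.e.\ the commutator norms $\|[D_N,a]\|$ and $\|[\wt\sD_N,a]\|$. The plan is therefore to prove that these two operator norms coincide for every $a\in\sA_N$; then the constraints $\|[D_N,a]\|\le1$ and $\|[\wt\sD_N,a]\|\le1$ cut out the same subset of matrices, and the two suprema agree.

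First I would compute both commutators. On $\sH_N=V_j\ox\bC^2$ the algebra acts as $a\ox1$, and the central summand $1\ox1$ of $D_N=1\ox1+\tsum_k\pi_j(J_k)\ox\sg_k$ drops out, giving at once
\[
[D_N,a]=\tsum_k\,[\pi_j(J_k),a]\ox\sg_k .
\]
On $\wt\sH_N\isom\sA_N\ox\bC^2$ the algebra acts by left multiplication, the identity summand of $\wt\sD_N$ again contributes nothing, and the remaining term is handled by the Leibniz-type identity $[\pi_j(J_k),ba]-b\,[\pi_j(J_k),a]=[\pi_j(J_k),b]\,a$. Writing $L_c$ for left multiplication by $c\in\sA_N$, this yields
\[
[\wt\sD_N,b]=\tsum_k\,L_{[\pi_j(J_k),b]}\ox\sg_k .
\]
Thus both commutators are assembled from the \emph{same} matrices $X_k:=[\pi_j(J_k),a]$; the sole difference is that in the irreducible triple $X_k$ acts on $V_j$ in the defining representation, whereas in the full triple it acts on $\sA_N$ by left multiplication.

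It remains to see that this difference does not alter the operator norm. Under the GNS identification $\sA_N\isom V_j\ox V_j^*$ of the trace Hilbert space, left multiplication by $X_k$ becomes $X_k\ox1$, acting only on the $V_j$ leg. Hence on $V_j\ox V_j^*\ox\bC^2$ one has $[\wt\sD_N,a]=\tsum_k(X_k\ox1)\ox\sg_k$, and permuting the tensor legs—a unitary—carries this to $\bigl(\tsum_k X_k\ox\sg_k\bigr)\ox1=[D_N,a]\ox1$ on $(V_j\ox\bC^2)\ox V_j^*$. Amplifying an operator by the identity on an auxiliary space leaves its norm unchanged, so $\|[\wt\sD_N,a]\|=\|[D_N,a]\|$ for all $a$, and the two metrics coincide.

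The computations are all elementary; the work is bookkeeping. The step most deserving of care is the Leibniz identity that collapses the full-triple commutator to pure left multiplication—the right-multiplication pieces must cancel exactly—together with the observation that left multiplication on the trace Hilbert space is \emph{precisely} the amplification $X\mapsto X\ox1$, rather than any operation that could entangle the two tensor legs. Once these are secured, the norm equality, and hence the equality of metrics, is immediate.
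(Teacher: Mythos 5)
Your proposal is correct and follows essentially the same route as the paper: both reduce everything to the Leibniz-rule computation showing $[\wt\sD_N,a]$ is left multiplication by the matrix $[D_N,a]\in\sA_N\ox M_2(\bC)$, and then conclude the commutator norms (hence the metrics) coincide. The only difference is that where the paper simply asserts that left multiplication on the GNS space has the same operator norm as the matrix itself, you justify this explicitly via the identification $\sA_N\isom V_j\ox V_j^*$ and the unitary permutation of tensor legs exhibiting $[\wt\sD_N,a]$ as the amplification $[D_N,a]\ox 1$ --- a correct and slightly more self-contained account of that step.
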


\begin{proof}
This follows from the calculation:
\begin{align*}
\qquad
[\wt\sD_N, a](b \ox v) &= \tsum_k \bigl( [\pi_j(J_k), ab]
- a\,[\pi_j(J_k), b] \bigr) \ox \sg_k v
\\[\jot]
&= \tsum_k [\pi_j(J_k), a]\,b \ox \sg_k v = [D_N,a](b \ox v).
\end{align*}
Hence $[\wt\sD_N, a]$ is the operator of left multiplication by the
matrix $[D_N,a] \in \sA_N \ox M_2(\bC)$, so its operator norm coincides
with the norm of the matrix. Therefore, since
$\bigl\| [\wt\sD_N,a] \bigr\| = \bigl\| [D_N,a] \bigr\|$ for each $a
\in \sA_N$, it follows that the two spectral triples induce the same
metric \eqref{eq:metric} on the state space of~$\sA_N$.
\end{proof}

It is useful to give a more explicit presentation of the full 
spectral triple by exhibiting its eigenspinors. Recall that the 
polynomial algebra $\sA(\bS^2)$ is linearly spanned by the spherical 
harmonics $Y_{\ell,m}$, each of which is a homogeneous polynomial in 
Cartesian coordinates of degree~$\ell$, with the multiplication rule
$$
Y_{\ell',m'} Y_{\ell'',m''}
= \sum_{\ell=|\ell'-\ell''|}^{\ell=\ell'+\ell''} \sum_{m=-\ell}^\ell
\sqrt{\frac{(2\ell' + 1)(2\ell'' + 1)}{4\pi(2\ell + 1)}}\,
\CG{\ell'}{\ell''}{\ell}{0}{0}{0} 
\CG{\ell'}{\ell''}{\ell}{m'}{m''}{m} Y_{\ell,m} \,,
$$
involving $SU(2)$ Clebsch--Gordan coefficients. From there it is 
clear that the subspace spanned by the $Y_{\ell,m}$ for 
$\ell = 0,1,\dots,N$ does not close under multiplication. To replace 
them, while keeping $SU(2)$ symmetry, one can make use of the 
irreducible tensor operators at level $N = 2j$ \cite{Aga81,BL81}.
These are elements $\wh T^{(j)}_{\ell,m} \in M_{N+1}(\bC)$ whose
matrix elements are
$$
\braCket{jm''}{\wh T^{(j)}_{\ell,m}}{jm'}
:= \sqrt{\frac{2\ell+1}{2j+1}}\, \CG{j}{\ell}{j}{m'}{m}{m''} \,.
$$
They transform like the $Y_{\ell,m}$ under $SU(2)$, but still require
an appropriate normalization. For any $-1 \leq s \leq 1$, one can
define a matrix $\wh Y^{(j,s)}_{\ell,m} \in M_{N+1}(\bC)$ as follows
\cite{AC91,BM98,KE02,CMS01}:
\begin{equation}
\label{eq:fuzzy-harms} 
\wh Y^{(j,s)}_{\ell,m} := \sqrt{\frac{4\pi}{2j+1}}\, 
\bigl( \CG{j}{\ell}{j}{j}{0}{j} \bigr)^s \,\wh T^{(j)}_{\ell,m}.
\end{equation}
We omit the precise multiplication rules for these operators,
see~\cite{KE02}; but in any case it is clear, by working backwards,
that the ordinary spherical harmonic $Y_{\ell, m}$ can be
regarded as a ``symbol'' of the operator $\wh Y^{(j,s)}_{\ell,m}$ for
fixed $j$ and~$s$. The cases $s = 1$, $s = 0$ and $s = -1$ correspond
respectively to the Husimi $Q$-function, the Moyal--Wigner
$W$-function and the Glauber $P$-function~\cite{BM98}. Here we put
$s = 1$ in~\eqref{eq:fuzzy-harms}, omit the superscripts, and call
these operators the \textit{fuzzy harmonics}
$\wh Y_{\ell,m} \in\sA_N$. The commutation rules for the irreducible
tensor operators and the fuzzy harmonics come directly from their
symmetries~\cite{Aga81,BL81,VG89}:
\begin{align*}
[\pi_j(J_3), \wh Y_{\ell,m}] 
&= [\wh Y_{1,0}, \wh Y_{\ell,m}] = m\,\wh Y_{\ell,m} \,,
\\
[\pi_j(J_1 \pm \ii J_2), \wh Y_{\ell,m}]
&=  [\wh Y_{1,\pm 1}, \wh Y_{\ell,m}]
= \sqrt{(\ell \mp m)(\ell \pm m + 1)}\,\wh Y_{\ell,m\pm 1} \,.
\end{align*}
Adopting a $2 \x 2$ block matrix notation, as in~\eqref{eq:abs-D}, we
can write
\begin{equation}
\wt\sD_N = \twobytwo{1 + \sL_3}{\sL_-}{\sL_+}{1 - \sL_3}, \word{where}
\sL_3 = \ad\pi_j(J_3), \ \sL_\pm = \ad\pi_j(J_1 \pm \ii J_2).
\label{eq:full-Dirac} 
\end{equation}
Then the normalized eigenspinors for the operators $\wt\sD_N$ are
$$
\kket{\ell,m}_+ := \frac{1}{\sqrt{2\ell+1}} \begin{pmatrix}
\sqrt{\ell+m+1}\, \wh Y_{\ell,m} \\[\jot]
\sqrt{\ell-m}\, \wh Y_{\ell,m+1} \end{pmatrix},  \quad
\kket{\ell,m}_- := \frac{1}{\sqrt{2\ell+1}} \begin{pmatrix}
- \sqrt{\ell-m}\, \wh Y_{\ell,m} \\[\jot]
\sqrt{\ell+m+1}\, \wh Y_{\ell,m+1} \end{pmatrix}
$$
for $\ell = 0,1,\dots,N$; whereby
\begin{align*}
\wt\sD_N \kket{\ell,m}_+ &= (\ell + 1)\,\kket{\ell,m}_+
\word{for}  m = -\ell-1,\dots,\ell\,;
\\
\wt\sD_N \kket{\ell,m}_- &= \quad (-\ell)\,\kket{\ell,m}_-
\word{for}  m = -\ell,\dots,\ell - 1.
\end{align*}

The full spectral triple on~$\sA_N$ is thus a truncation of the
standard spectral triple over~$\bS^2$, in the following sense. The
Hilbert space of spinors $L^2(\bS^2) \ox \bC^2$, generated by pairs of
spherical harmonics $Y_{\ell,m}$, is truncated at $l \leq N$. On
replacing these by pairs of fuzzy harmonics $\wh Y_{\ell,m}$, the
resulting spectrum of $\wt\sD_N$ is a truncation of the spectrum of
$\Dirac$ to the range $\{-N, \dots, N + 1\}$, unavoidably breaking the
parity symmetry.

\subsection{Comparison with the literature} 
\label{ssc:comp-lit}

Two spectral triples on the fuzzy sphere algebra $\sA_n$ have been
introduced, one constructed with the irreducible $\gsu(2)$-module
$V_j$ and the other with the left regular or GNS representation.
Neither one is even (there exists no grading); although this could be
remedied by allowing $\sA_N$ to act trivially on a supplementary
vector space. The first carries no real structure but the second one
does, because the reducible action of the algebra on the Hilbert space
allows for a large enough commutant. The crucial point here, however,
is Prop.~\ref{pr:same-metric}, showing that both spectral triples give
the same metric. Other Dirac operators for the fuzzy sphere have been
proposed in \cite{GKP97,CWW97,CWW98,BP07,BP09} and are recalled below.

In \cite{GKP97}, $\sA_N$ is obtained as the even part of a truncated
supersphere, and the Dirac operator is defined as the odd part of a
truncated superfield. Reformulating the result of Sect.~4.3
of~\cite{GKP97} in our language, the Hilbert space is taken to be
$$
\sH'_N := \bigoplus_{\ell=\half,\dots,N-\half} V_\ell \oplus V_\ell\,.
$$
Note that to get our $\sA_N \ox V_\half$ one must add an extra
$V_{N+\half}$ subspace. The algebra $\sA_N$ is generated by the three
matrices $\hat x_k$, proportional to $\pi_j(J_k)$, which can be
represented on $\sH'_N$ using a suitable direct sum of irreducible
representations of $\gsu(2)$. The Dirac operator can be defined by
representing the abstract Dirac element \eqref{eq:D-elt} on $\sH'$
using the same representation of $\gsu(2)$; it is proportional to the
identity on each subspace $V_\ell$ and its spectrum is given by the
eigenvalues $\pm\ell$, for $\ell = 1,\dots,N$ (restricted to
$V_\ell \oplus V_\ell$ their Dirac operator is the operator
$\ell \oplus -\ell$). Compared to our full spectral triple, the
eigenvalue $N + 1$ is missing. Since the two copies of $V_\ell$ carry
the same representation of $\sA_N$, the operator $\ga_N$ that
exchanges these copies commutes with $\sA_N$ (and anticommutes with
the Dirac operator): therefore, one obtains an even spectral triple.

This construct is still metrically equivalent to the spectral triples
of subsection~\ref{ssc:Dirac-fuzzy}. Here
$\wt\sH_N \isom \sH'_N \oplus V_{N+\half}$; but the additional term
$V_{N+\half}$ carries a nontrivial subrepresentation of $\sA_N$, and
the Dirac operator $\wt D_N$ is proportional to the identity on such a
subspace: hence $[\wt D_N,a]$ vanishes on the subspace $V_{N+\half}$
for any $a \in \sA_N$. Therefore the two spectral triples induce the
same seminorm on $\sA_N$, and hence the same distance.

The authors of \cite{BP07,BP09} take another approach. Given any
finite-dimensional $\gsu(2)$-module $\Sg$, one can construct a
Dirac-like operator on $L^2(\bS^2) \ox \Sg$ by using the appropriate
representation of the abstract Dirac element \eqref{eq:D-elt}. If
$\Sg$ is the spin $j$ representation, this can be called a
``spin-$j$'' Dirac operator. For $j = \half$ we recover the ordinary
Dirac operator acting on $2$-spinors.

A spin-$\half$ Dirac operator for the fuzzy sphere is discussed in
\cite{BP07}, and is generalized to arbitrary spin~$j$ in~\cite{BP09}.
These are constructed using the Ginsparg--Wilson algebra, namely, the
free algebra generated by two grading operators $\Ga$ and~$\Ga'$. The
linear combinations $\Ga_1 = \half(\Ga + \Ga')$,
$\Ga_2 = \half(\Ga - \Ga')$, anticommute, and the proposal is to
realize them as operators on a suitable Hilbert space, interpreting
$\Ga_1$ as the Dirac operator and $\Ga_2$ as the chirality operator.
In the spin-$\half$ case, the Hilbert space is taken to be
$\sA_N \ox V_\half$. From equation (2.20) of~\cite{BP09}, or
equivalently from (8.29) of~\cite{BP07}, we see that the Dirac
operator is the same as the operator~\eqref{eq:full-Dirac} of our full
spectral triple. The chirality operator, (2.21) of~\cite{BP09}, in
contrast with the Dirac operator, is constructed using the
anticommutator with $\pi_j(J_k)$, i.e., $L_k^L + L_k^R$ in the
notation of~\cite{BP09}.

The asymmetry of the Dirac operator spectrum was already noticed
in~\cite{BP07}. At the end of subsection~8.3.2 we read:
\begin{quote}
\textsl{For} $j = 2L + 1$ [$\ell = N+1$ in our notations here]
\textsl{we get the positive eigenvalue correctly, but the negative one
is missing. That is an edge effect caused by cutting off the angular
momentum at~$2L$.}
\end{quote}
And in the same subsection, after equation (8.30):
\begin{quote}
\textsl{As mentioned earlier, use of $\Ga_2$ as chirality resolves a
difficulty addressed elsewhere~[80], where $\sign(\Ga_2)$ was used as
chirality. That necessitates projecting out $V_{+1}$ and creates a
very inelegant situation.}
\end{quote}
In other words, $\Ga_2$ is not a true grading operator. Since $\Ga_2$
anticommutes with the Dirac operator, it must vanish on $V_{N+\half}$
(otherwise, the Dirac operator would have an eigenvector $\Ga_2v$ for
$v\in V_{N+\half}$, with eigenvalue $-N-1$); which entails
$(\Ga_2)^2 \neq 1$.

A third proposal is that of~\cite{CWW97,CWW98}. It starts by 
constructing, on the Hilbert space $\sA_N$, a square~$1$ chirality operator
that is a genuine $\bZ_2$-grading, then finding a
Dirac-like operator $\nD$ by imposing anticommutation with the
grading, arriving at an even spectral triple. It follows that this
operator cannot be isospectral to our $\wt\sD_N$. The earlier paper
uses a chirality operator $\ga_\chi$, see~(5) of~\cite{CWW98}, that
does not commute with the algebra~$\sA_N$. Later, in (6)
of~\cite{CWW98}, this is corrected to $\ga_\chi^\circ$ by replacing
left with right multiplication. On imposing anticommutation of $\nD$
with that grading, one arrives at a ``second order'' operator, (8)
of~\cite{CWW98}, that in our notations is
$\nD(a \ox v) := c\, \ga_\chi^\circ \sum_{klm} \eps_{klm}\,
\pi_j(J_k)\,a\,\pi_j(J_l) \ox \sg_m v$, where $c$ is a normalization
constant.

{}From (17) of~\cite{CWW98}, relabelling with $\ell = j + \half$,
we see that the spectrum of $\nD$ is given by the eigenvalues
$\pm \la_\ell$, for $\ell = 1,\dots,N+1$, with
$$
\la_\ell^2 := \frac{\ell^2((N+1)^2 - \ell^2)}{N(N + 2)} \,.
$$
Note that $\la_\ell$ is nonlinear in $\ell$, and that $\la_{N+1} = 0$,
i.e., this operator has a kernel~$V_{N+\half}$.

The mentioned proposals, and other variants such as~\cite{HQT06}, 
begin with a chirality operator and then find an anticommuting 
self-adjoint Dirac-like operator with a plausible spectrum. Our 
approach, in contrast, starts from $SU(2)$-equivariance and arrives 
at a neater truncation of the classical spectrum, paying the price of 
spectral asymmetry.


\section{Spectral distance between coherent states} 
\label{sec:dist}

Having reduced the problem of computing distances on the fuzzy sphere,
via Prop.~\ref{pr:same-metric}, to the use of the irreducible 
spectral triple $(\sA_N,\sH_N,D_N)$, we now compute the  
distance between particular pairs of pure states in~$\sS(\sA_n)$.
Using \eqref{eq:abs-D}, we know that
\begin{equation}
D_N = \twobytwo{1 + \pi_j(H)}{\pi_j(F)}{\pi_j(E)}{1 - \pi_j(H)}
\label{eq:DN-matrix} 
\end{equation}
where again $2j = N$. From now on we omit the representation symbol
$\pi_j$ and use the matrix of~\eqref{eq:abs-D} instead, by an abuse of
notation. The spectral distance is denoted by~$d_N$.

\begin{lemma} 
\label{lm:ineq}
For any $a \in \sA_N$, the following inequalities hold:
$$
\bigl\| [H,a] \bigr\| \leq \bigl\| [D_N,a] \bigr\|,  \qquad
\bigl\| [E,a] \bigr\| \leq \bigl\| [D_N,a] \bigr\|,  \qquad
\bigl\| [F,a] \bigr\| \leq \bigl\| [D_N,a] \bigr\|.
$$
Moreover, if $a$ is a diagonal hermitian matrix, then
$\|[D_N,a]\| = \|[E,a]\|$.
\end{lemma}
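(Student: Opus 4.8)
The plan is to exploit the $2\times2$ block structure of the commutator. Since $a \in \sA_N$ acts on $\sH_N = V_j \ox \bC^2$ as $a \ox 1$, the block form~\eqref{eq:DN-matrix} of $D_N$ together with $[1,a]=0$ gives at once
$$
[D_N, a] = \twobytwo{[H,a]}{[F,a]}{[E,a]}{-[H,a]},
$$
whose four entries are operators on $V_j$. Everything then reduces to comparing the norm of a block operator matrix with the norms of its blocks.

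For the three inequalities I would use the elementary fact that each entry of an operator matrix has norm no larger than that of the whole matrix. Concretely, applying $[D_N,a]$ to a vector $v \ox \tbinom{1}{0}$ produces the column $\bigl([H,a]v,\ [E,a]v\bigr)^t$, whose norm is at least that of either component, while $\|v \ox \tbinom{1}{0}\| = \|v\|_\sH$; taking the supremum over $v$ yields $\|[H,a]\| \le \|[D_N,a]\|$ and $\|[E,a]\| \le \|[D_N,a]\|$. Feeding in $v \ox \tbinom{0}{1}$ instead gives the column $\bigl([F,a]v,\ -[H,a]v\bigr)^t$ and hence $\|[F,a]\| \le \|[D_N,a]\|$. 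This disposes of the first assertion.

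For the equality when $a$ is a diagonal hermitian matrix, the observation is that $H$ is diagonal in the weight basis $\set{\ket{j,m}}$, so $[H,a] = 0$ and the commutator collapses to the off-diagonal form $\twobytwo{0}{[F,a]}{[E,a]}{0}$. Because $F = E^*$ and $a = a^*$, one computes $[F,a]^* = -[E,a]$, so $[F,a] = -[E,a]^*$ and the two surviving blocks have equal norm. I would then invoke the norm identity $\bigl\|\twobytwo{0}{X}{Y}{0}\bigr\| = \max\set{\|X\|,\|Y\|}$ for off-diagonal operator matrices --- which follows by writing the image of $(v,w)^t$ as $(Xw,\ Yv)^t$ and optimizing the split of the unit norm between $v$ and~$w$ --- to conclude that $\|[D_N,a]\| = \|[E,a]\|$.

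None of these steps is deep: the only points needing care are the explicit use of the fact that $a$ acts as $a \ox 1$ (so that the blocks are exactly the four commutators above) and the short optimization behind the off-diagonal norm identity. I would therefore expect the write-up to be correspondingly brief.
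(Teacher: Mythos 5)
Your proof is correct and takes essentially the same approach as the paper: both exploit the $2\times2$ block structure of $[D_N,a]$, obtain the lower bounds by restricting to vectors supported in a single column (the paper phrases this via the $(1,1)$ entry of $[D_N,a]^*[D_N,a]$), and for diagonal hermitian $a$ combine $[H,a]=0$ with $[F,a]=-[E,a]^*$ to reduce to an off-diagonal matrix whose norm is the common value $\|[E,a]\|=\|[F,a]\|$. The only cosmetic differences are that the paper gets the $\|[F,a]\|$ bound from the adjoint relation $[F,a]=-[E,a^*]^*$ and $\|[D_N,a^*]\|=\|[D_N,a]\|$ rather than by feeding in $(0,v)^t$, and it reads off the final equality from the block-diagonality of $[D_N,a]^*[D_N,a]$ rather than from your off-diagonal norm identity.
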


\begin{proof}
Using the expression
$$
[D_N,a]^* [D_N,a] = \begin{pmatrix}
[H,a]^* [H,a] + [E,a]^* [E,a] & \cdots \\
\cdots & \cdots \end{pmatrix},
$$
we find a lower bound for $\|[D_N,a]\|$ taking the supremum over
unit vectors of the form $(x,0)^t$, with $x \in V_j$:
\begin{align*}
\bigl\| [D_N,a] \bigr\|^2 
&\geq \sup_{\|x\|=1} \bigr< x \bigm| 
([H,a]^* [H,a] + [E,a]^* [E,a] \bigr) x \bigr>
\\
&= \sup_{\|x\|=1} \bigl( \|[H,a]\,x\|^2 + \|[E,a]\,x\|^2 \bigr)
= \|[H,a]\|^2 + \|[E,a]\|^2.
\end{align*}
Thus $\|[H,a]\| \leq \bigl\| [D_N,a] \bigr\|$ and
$\|[E,a]\| \leq \bigl\| [D_N,a] \bigr\|$, Since $[F,a] = -[E,a^*]^*$,
we also get
$\|[F,a]\| \leq \|[D_N,a^*]\| = \|[D_N,a]^*\| = \|[D_N,a]\|$.

If $a \in \sA_N$ is a diagonal matrix, then $[H,a] = 0$, so that
$$
[D_N,a]^* [D_N,a] = \twobytwo{[E,a]^* [E,a]}{0}{0}{[F,a]^* [F,a]},
$$
thus $\|[D_N,a]\|$ is the greater of $\|[E,a]\|$ and $\|[F,a]\|$.
Furthermore, if $a = a^*$, then $[F,a] = -[E,a]^*$ and
$\|[E,a]\| = \|[F,a]\|$, so that $\|[D_N,a]\| = \|[E,a]\| = \|[F,a]\|$.
\end{proof}

The $SU(2)$-coherent states on $\sA_N$ were introduced 
in~\cite{ACGT72}, under the names Bloch or atomic coherent states, by
applying the rotation $R_{(\vf,\th)}$ to the ``ground'' state
$\ket{j,-j} \in V_j$. The coherent-state vectors are~\cite{ACGT72}:
\begin{equation}
\cket{\vf,\th}_N := \sum_{m=-j}^j \binom{2j}{j+m}^{\!\half} e^{-im\vf}
(\sin\tfrac{\th}{2})^{j+m} (\cos\tfrac{\th}{2})^{j-m}\, \ket{j,m}.
\label{eq:Bloch} 
\end{equation}
The corresponding vector states are denoted by
$$
\psi_{(\vf,\th)}^N(a) := \cbraCket{\vf,\th}{a}{\vf,\th}_N \,.
$$
These Bloch coherent states are for the group $SU(2)$ what the usual
harmonic oscillator coherent states are for the Heisenberg
group~\cite{Per86}. In particular, they are minimum uncertainty
states. The map $\bS^2 \to V_j$, sending the point
$(\vf,\th) \in \bS^2$ to the vector $\cket{\vf,\th}$, intertwines the
rotation action of $SU(2)$ on $\bS^2$ with the irrep $\pi_j$ on~$V_j$.
At the infinitesimal level, this is expressed by the next lemma, 
whose proof is a simple direct computation.

\begin{lemma} 
Regarding $\psi_{(\vf,\th)}^N$ as a vector state on $\sB(V_j)$, we 
find that
\begin{subequations}
\label{eq:psi} 
\begin{align}
\psi^N_{(\vf,\th)}([H,a]) &= -\ii\,\pd{}{\vf}\, \psi^N_{(\vf,\th)}(a),
\label{eq:psi-Ha} 
\\
\psi^N_{(\vf,\th)}([E,a])
&= e^{\ii\vf} \biggl( \pd{}{\th} + \ii \cot\th \pd{}{\vf} \biggr)
\psi^N_{(\vf,\th)}(a),
\label{eq:psi-Ea} 
\\
\psi^N_{(\vf,\th)}([F,a])
&= - e^{-\ii\vf} \biggl( \pd{}{\th} - \ii \cot\th \pd{}{\vf} \biggr)
\psi^N_{(\vf,\th)}(a).
\label{eq:psi-Fa} 
\end{align}
\end{subequations}
\end{lemma}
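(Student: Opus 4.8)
The plan is to reduce all three identities to the single geometric fact, recalled just above~\eqref{eq:Bloch}, that the coherent vector $v := \cket{\vf,\th}_N$ is the rotate $\pi_j(R_{(\vf,\th)})\ket{j,-j}$ of the lowest-weight vector, so that the generators $H$, $E$, $F$, applied to~$v$, act as explicit first-order differential operators in $\th$ and~$\vf$. First I would differentiate the explicit sum~\eqref{eq:Bloch}. The phases $e^{-\ii m\vf}$ give at once $\del_\vf v = -\ii\,Hv$, i.e. $Hv = \ii\,\del_\vf v$. Differentiating the factor $(\sin\tfrac\th2)^{j+m}(\cos\tfrac\th2)^{j-m}$ and using the half-angle identity $(j+m)\cot\tfrac\th2 - (j-m)\tan\tfrac\th2 = 2(m + j\cos\th)/\sin\th$ gives $\del_\th v = \frac1{\sin\th}(H + j\cos\th)v$. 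Finally, recognising the binomial ratios as matrix elements of the ladder operators yields $Ev = e^{\ii\vf}\cot\tfrac\th2\,(j + H)v$ and $Fv = e^{-\ii\vf}\tan\tfrac\th2\,(j - H)v$; combined with the two derivative formulas and the identities $\cot\tfrac\th2\sin\th = 1+\cos\th$, $\tan\tfrac\th2\sin\th = 1-\cos\th$, these express $Ev$ and $Fv$ as first-order differential operators applied to~$v$.

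The second step is purely formal. For a fixed matrix $a$ and any generator $X\in\{H,E,F\}$, I would write $\psi^N_{(\vf,\th)}([X,a]) = \scal{v}{Xa\,v} - \scal{v}{aX\,v}$ and move the left-hand $X$ onto the first slot using $H^* = H$, $E^* = F$, $F^* = E$, turning $\scal{v}{Xa\,v}$ into $\scal{X^*v}{a\,v}$. Each term then involves a generator acting on only one copy of~$v$, so the formulas of the first step may be inserted. Comparison is made with the Leibniz expansions $\del_\vf\psi^N_{(\vf,\th)}(a) = \scal{\del_\vf v}{a\,v} + \scal{v}{a\,\del_\vf v}$, and similarly for~$\del_\th$.

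Because the two derivatives are not independent --- they are tied by the single vector relation $\sin\th\,\del_\th v = \ii\,\del_\vf v + j\cos\th\,v$ inherited from the orbit structure --- the final step is to apply this relation in each slot of the scalar product, so as to fold the separate $\del_\th$ and $\del_\vf$ contributions into the advertised combinations $\del_\th \pm \ii\cot\th\,\del_\vf$, with the weight-dependent pieces cancelling.

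The identity~\eqref{eq:psi-Ha} for~$H$ is the easy warm-up: $H$ is self-adjoint, so both terms carry only the phase-derivative and the product rule closes immediately. The main obstacle is~\eqref{eq:psi-Ea}--\eqref{eq:psi-Fa}: one must track the conjugation of the phases $e^{\pm\ii\vf}$ and of the factors of~$\ii$ under the antilinearity of the scalar product in its first slot, simultaneously with the half-angle rearrangements, and then check that the weight-dependent terms ($j\sin\th$, $j\cos\th$), arising from the lowest-weight anchor, cancel exactly, leaving a single $N$-independent first-order operator. That cancellation, and the correct placement of every sign, is the delicate heart of this ``simple direct computation.''
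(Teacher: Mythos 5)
Your plan is, in substance, exactly the ``simple direct computation'' that the paper declines to write out, and every intermediate formula you quote is correct: with $v := \cket{\vf,\th}_N$ as in \eqref{eq:Bloch} one indeed has $Hv = \ii\,\del_\vf v$, $\del_\th v = \frac{1}{\sin\th}\,(H + j\cos\th)\,v$, $Ev = e^{\ii\vf}\cot\tfrac{\th}{2}\,(j+H)\,v$, $Fv = e^{-\ii\vf}\tan\tfrac{\th}{2}\,(j-H)\,v$, and your two-slot bookkeeping closes: setting $A := \scal{Hv}{av}$ and $B := \scal{v}{aHv}$, the Leibniz rules give the solvable system $A+B = \sin\th\,\del_\th\psi(a) - 2j\cos\th\,\psi(a)$ and $A-B = -\ii\,\del_\vf\psi(a)$, after which the $j$-dependent terms cancel exactly as you predict, using $\tan\tfrac{\th}{2}(1+\cos\th) = \cot\tfrac{\th}{2}(1-\cos\th) = \sin\th$.

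However, you should be warned of where this lands: carried honestly to the end, your computation yields \eqref{eq:psi-Ha} as stated, but for the ladder operators it produces
\begin{align*}
\psi^N_{(\vf,\th)}([E,a]) &= -\,e^{\ii\vf}\bigl(\del_\th + \ii\cot\th\,\del_\vf\bigr)\,\psi^N_{(\vf,\th)}(a),\\
\psi^N_{(\vf,\th)}([F,a]) &= +\,e^{-\ii\vf}\bigl(\del_\th - \ii\cot\th\,\del_\vf\bigr)\,\psi^N_{(\vf,\th)}(a),
\end{align*}
i.e.\ the opposite overall signs to \eqref{eq:psi-Ea}--\eqref{eq:psi-Fa}. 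This is not a flaw in your argument: the printed signs are inconsistent with the convention \eqref{eq:Bloch}. A two-by-two counterexample: for $j = \half$ and $a = H$ one has $[E,H] = -E$, $\psi^1_{(\vf,\th)}(H) = -\half\cos\th$ and $\psi^1_{(\vf,\th)}(E) = \half\, e^{\ii\vf}\sin\th$, so the left-hand side of \eqref{eq:psi-Ea} equals $-\half\, e^{\ii\vf}\sin\th$ while its right-hand side equals $+\half\, e^{\ii\vf}\sin\th$. The lemma as printed corresponds to the opposite convention, in which $\th = 0$ is the highest-weight vector $\ket{j,j}$ (equivalently, $\sin\tfrac{\th}{2}$ and $\cos\tfrac{\th}{2}$ interchanged in \eqref{eq:Bloch}; compare the formula for $\cket{\vf,\th}_1$ in the proof of Lemma~\ref{lm:cohere}, which likewise disagrees with \eqref{eq:Bloch}). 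The discrepancy is harmless downstream, because it is compensated in Lemma~\ref{lm:deriv} by the companion slip there: with the $\hat a$ of \eqref{eq:hata}, whose eigenvalues are $\leq 0$ and which vanishes on $\ket{j,-j}$, one has $\rho_N(\th) = \psi^N_{(0,0)}(\hat a) - \psi^N_{(0,\th)}(\hat a)$, not the reverse as written; the two sign errors cancel and $\rho_N'(\th) = \psi^N_{(0,\th)}([E,\hat a]) \geq 0$ survives. So: complete the computation as planned, but either state the result with the corrected signs or fix the coherent-state convention first; otherwise your proof and the statement will not match, and you may waste time hunting for an algebra error that is not yours.
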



\subsection{The $N = 1$ case} 
\label{sec:djaib}

We write the general hermitian element $a = a^*\in M_2(\bC)$ as
$$
a = \twobytwo{a_0 + a_3}{a_1 + \ii a_2}{a_1 - \ii a_2}{a_0 - a_3}
= a_0\,1_2 + \avec \cdot \vec\sg,
$$
with $a_0$ real and $\avec = (a_1,a_2,a_3) \in \bR^3$. Arbitrary (not
necessarily pure) states on $M_2(\bC)$ are given by
$\om_\xvec(a) := a_0 + \xvec \cdot \avec$, with $\xvec$ in the closed
unit ball $B^3 \subset \bR^3$. This state is pure if and only if
$\xvec = (\sin\th \cos\vf, \sin\th \sin\vf, \cos\th)$ lies on the
boundary $\bS^2$ of the ball, in which case it coincides with the
coherent state $\psi^1_{(\vf,\th)}$. Note that \textit{for $N = 1$,
all pure states are coherent states}.

The next proposition shows that the distance among states is half of
the Euclidean distance in the ball; thus, for coherent states,
\textit{half of the chordal distance} on the sphere.

\begin{prop} 
\label{pr:djaib}
For all $\xvec,\yvec \in B^3$, the distance between the corresponding
states is
\begin{equation}
\label{eq:djaib} 
d_1(\om_\xvec, \om_\yvec) = \frac{1}{2}\, |\xvec - \yvec|.
\end{equation}
In particular, $d_1(\psi^1_{(0,\th)},\psi^1_{(0,0)}) = \sin(\th/2)$.
\end{prop}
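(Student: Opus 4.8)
The plan is to convert the supremum in \eqref{eq:metric} into a constrained optimization over $\bR^3$ and then evaluate it by Cauchy--Schwarz. Since the supremum defining $d_1$ is attained on self-adjoint elements, I restrict to $a = a^* = a_0\,1_2 + \avec\cdot\vec\sg$ with $a_0 \in \bR$ and $\avec \in \bR^3$. The objective factor is then
$$
\om_\xvec(a) - \om_\yvec(a) = (\xvec - \yvec)\cdot\avec,
$$
which is independent of the scalar part $a_0$; moreover $a_0\,1_2$ commutes with $D_1$, so it does not affect the side condition either. Hence I may take $a_0 = 0$, reducing the problem to
$$
d_1(\om_\xvec, \om_\yvec) = \sup\set{|(\xvec - \yvec)\cdot\avec| : \|[D_1, \avec\cdot\vec\sg]\| \leq 1}.
$$

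The key step is to establish that $\|[D_1, \avec\cdot\vec\sg]\| = 2\,|\avec|$. First I would observe that this norm depends only on $|\avec|$: by the equivariance of the irreducible triple (Prop.~\ref{pr:DN-irred}(i)), $D_1$ commutes with the range of $\pi_\half \ox \pi_\half$, so conjugating $a\ox 1$ by the unitary $\pi_\half(g)\ox\pi_\half(g)$ conjugates $[D_1,a\ox 1]$ by the same unitary and leaves its operator norm unchanged; meanwhile $a \mapsto \pi_\half(g)\,a\,\pi_\half(g)^*$ sends $\avec$ to $R(g)\avec$, and $R(g)$ sweeps out all of $SO(3)$ as $g$ ranges over $SU(2)$. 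It therefore suffices to compute the norm for $\avec$ along the third axis, i.e.\ for the diagonal hermitian matrix $a = a_3\,\sg_3$. For such $a$ the last assertion of Lemma~\ref{lm:ineq} gives $\|[D_1,a]\| = \|[E,a]\|$, and a one-line calculation using $[E,\sg_3] = -2E$ together with $\|E\| = 1$ yields $\|[E,a]\| = 2\,|a_3| = 2\,|\avec|$. By rotational invariance this holds for every $\avec \in \bR^3$.

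With the side condition rewritten as $|\avec| \le \half$, the optimization is immediate: by Cauchy--Schwarz,
$$
|(\xvec - \yvec)\cdot\avec| \leq |\xvec - \yvec|\,|\avec| \leq \tfrac12\,|\xvec - \yvec|,
$$
with equality attained by any $\avec$ parallel to $\xvec - \yvec$ with $|\avec| = \half$ (and trivially when $\xvec = \yvec$). This proves \eqref{eq:djaib}. For the stated special case I would substitute the coherent-state coordinate vectors $\xvec = (\sin\th, 0, \cos\th)$ for $\psi^1_{(0,\th)}$ and $\yvec = (0,0,1)$ for $\psi^1_{(0,0)}$, compute $|\xvec - \yvec|^2 = 2 - 2\cos\th = 4\sin^2(\th/2)$, and conclude $d_1 = \tfrac12\cdot 2\sin(\th/2) = \sin(\th/2)$.

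The main obstacle is the single norm identity $\|[D_1, \avec\cdot\vec\sg]\| = 2\,|\avec|$; everything after it is Cauchy--Schwarz and bookkeeping. The reduction to diagonal $a$ via equivariance is what makes this tractable, since it replaces an operator-norm estimate on $V_\half \ox \bC^2$ by the elementary scalar evaluation supplied by Lemma~\ref{lm:ineq}. One should also check that the maximizing $\avec$ indeed corresponds to an admissible $a \in \sA_1$, which it does.
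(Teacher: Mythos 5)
Your proof is correct, and it reaches the one nontrivial ingredient, the norm identity $\bigl\|[D_1,\avec\cdot\vec\sg\,]\bigr\| = 2|\avec|$, by a genuinely different route than the paper. The paper computes the explicit $4\times 4$ matrix of $[D_1,a]$ for hermitian $a$ and reads the norm off the characteristic polynomial $\det(\la - \ii[D_1,a]) = \la^2(\la^2 - 4|\avec|^2)$; everything after that (dropping $a_0$, Cauchy--Schwarz, saturation by $\avec$ parallel to $\xvec-\yvec$, and the chordal-distance specialization) is identical in the two arguments. You instead invoke the $SU(2)$-equivariance of the irreducible triple (Prop.~\ref{pr:DN-irred}(i)) to show the norm depends only on $|\avec|$ --- this is exactly the conjugation trick the paper itself uses only later, in Lemma~\ref{lm:d-round}, to prove $SU(2)$-invariance of the distance --- and then you settle the axial case $a = a_3\sg_3$ from the last assertion of Lemma~\ref{lm:ineq} together with $[E,\sg_3] = -2E$ and $\|E\| = \|\pi_\half(E)\| = 1$, all of which check out. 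What your route buys is conceptual economy: no $4\times 4$ determinant, the computation stays at the level of $2\times 2$ matrices, and it is transparent why the answer is rotation-invariant; it also generalizes in spirit, since the same symmetry reduction to diagonal elements underlies the higher-$N$ analysis via $\rho_N$. What the paper's brute-force computation buys is self-containedness at that point in the text: it needs neither the equivariance statement nor Lemma~\ref{lm:ineq}, whereas your argument imports both (harmlessly, since both precede Prop.~\ref{pr:djaib} in the paper's logical order). Both are complete proofs of \eqref{eq:djaib}.
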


\begin{proof}
Writing $a_\pm = a_1 \pm \ii a_2$ and $\sg_\pm = \sg_1 \pm \ii\sg_2$, 
we get, for $a = a^*$:
$$
[D_1,a]
= \twobytwo{\half[\sg_3,a]}{[\sg_-,a]}{[\sg_+,a]}{-\half[\sg_3,a]}
= \begin{pmatrix}
0 & a_+ & -a_+ & 0 \\
-a_- & 0 & 2a_3 & a_+ \\
a_- & -2a_3 & 0 & -a_+ \\
0 & -a_- & a_- & 0  \end{pmatrix}.
$$
The matrix $\ii[D_1,a]$ is hermitian, and its characteristic
polynomial is easily seen to be
$\det(\la - \ii[D_1,a]) = \la^2(\la^2 - 4|\avec|^2)$, showing that
its norm is $\|[D_1,a]\| = 2|\avec|$.

The Cauchy--Schwarz inequality
$$
\bigl| \om_\xvec(a) - \om_\yvec(a) \bigr|
= \bigl| (\xvec - \yvec) \cdot \avec \bigr| \leq |\xvec - \yvec|\,|\avec|
$$
is saturated when $\avec$ is parallel to $\xvec - \yvec$. Thus
$d_1(\om_\xvec, \om_\yvec)$ is the supremum of
$|\xvec - \yvec|\,|\avec|$ over hermitian $a$ with
$\|[D_1,a]\| = 2|\avec| \leq 1$. This establishes~\eqref{eq:djaib}.

If $\xvec = (\sin\th, 0, \cos\th)$ and $\yvec = (0,0,1)$, then
$|\xvec - \yvec|^2 = 2(1 - \cos\th) = 4\,\sin^2(\th/2)$, and thus
$d_1(\om_\xvec, \om_\yvec) = \sin(\th/2)$.
\end{proof}


\subsection{Distances between basis vectors} 
\label{ssc:basis}

Similarly to Prop.\ 3.6 of \cite{CDMW11}, the distance between basis
vectors can be exactly computed. For fixed $N = 2j$, and
$m \in \{-j,\dots,j\}$, the basic vector states are
$$
\om_m(a)  :=  \braCket{j,m}{a}{j,m} \,.
$$

\begin{prop} 
\label{pr:basis}
For any $m < n$ in $\{-j,\dots,j\}$, the following distance formula 
holds: 
\begin{equation}
\label{eq:basis} 
d_N(\om_m,\om_n) = \sum_{k=m+1}^n \frac{1}{\sqrt{(j+k)(j-k+1)}}\,.
\end{equation}
\end{prop}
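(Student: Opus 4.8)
The plan is to establish \eqref{eq:basis} by a two-sided estimate, showing that the supremum defining $d_N$ is in fact attained on a diagonal matrix. Throughout I would use that $\om_m(a) - \om_n(a) = a_{mm} - a_{nn}$, where $a_{pp} := \braCket{j,p}{a}{j,p}$ denote the diagonal matrix entries of $a$ in the basis $\set{\ket{j,m}}$, together with the telescoping identity
\begin{equation*}
a_{mm} - a_{nn} = \sum_{k=m+1}^n (a_{k-1,k-1} - a_{kk}).
\end{equation*}
The whole argument reduces each consecutive difference $a_{k-1,k-1}-a_{kk}$ to a single off-diagonal entry of a commutator.

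For the upper bound I would first record the key matrix-element identity. Since $E\ket{j,k-1} = \sqrt{(j+k)(j-k+1)}\,\ket{j,k}$ and, because $E^* = F$, also $\bra{j,k}E = \sqrt{(j+k)(j-k+1)}\,\bra{j,k-1}$, a direct computation gives, for \emph{any} $a \in \sA_N$,
\begin{equation*}
\braCket{j,k}{[E,a]}{j,k-1} = \sqrt{(j+k)(j-k+1)}\,(a_{k-1,k-1} - a_{kk}).
\end{equation*}
As $\ket{j,k}$ and $\ket{j,k-1}$ are unit vectors, the left side is bounded in modulus by $\|[E,a]\|$, which by Lemma~\ref{lm:ineq} is at most $\|[D_N,a]\| \le 1$. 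Hence $|a_{k-1,k-1} - a_{kk}| \le 1/\sqrt{(j+k)(j-k+1)}$ for every admissible $a$, and the telescoping identity with the triangle inequality yields $|\om_m(a) - \om_n(a)| \le \sum_{k=m+1}^n 1/\sqrt{(j+k)(j-k+1)}$. Taking the supremum gives the inequality ``$\le$'' in \eqref{eq:basis}.

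For the reverse inequality I would exhibit an optimal diagonal matrix. By the last assertion of Lemma~\ref{lm:ineq}, for diagonal hermitian $a$ one has $\|[D_N,a]\| = \|[E,a]\|$; moreover $[E,a]$ then has nonzero entries only on the subdiagonal, namely $\braCket{j,k}{[E,a]}{j,k-1} = \sqrt{(j+k)(j-k+1)}(a_{k-1,k-1}-a_{kk})$, so $[E,a]^*[E,a]$ is diagonal and $\|[E,a]\| = \max_k \sqrt{(j+k)(j-k+1)}\,|a_{k-1,k-1}-a_{kk}|$. I would then define a real diagonal matrix $a$ by prescribing its consecutive differences: set $a_{k-1,k-1} - a_{kk} := 1/\sqrt{(j+k)(j-k+1)}$ for $m+1 \le k \le n$, and $a_{k-1,k-1} - a_{kk} := 0$ otherwise (these prescriptions determine $a$ up to an additive constant, which is harmless since the states are unital). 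This $a$ is hermitian with $\|[D_N,a]\| = 1$, and $\om_m(a) - \om_n(a) = a_{mm}-a_{nn} = \sum_{k=m+1}^n 1/\sqrt{(j+k)(j-k+1)}$, so the supremum is attained and ``$\ge$'' holds. Combining the two bounds proves \eqref{eq:basis}.

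The individual computations are elementary; the one point requiring care is the reduction to diagonal matrices for the lower bound, which is exactly what the second part of Lemma~\ref{lm:ineq} supplies (without it, $\|[D_N,a]\|$ would not collapse to the single off-diagonal norm $\|[E,a]\|$ and the extremal matrix would be harder to identify). I expect the main obstacle to be verifying cleanly that the supremum is genuinely attained rather than merely approached, that is, checking that the prescribed diagonal $a$ simultaneously saturates every telescoped term and keeps $\|[D_N,a]\| \le 1$.
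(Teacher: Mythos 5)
Your proposal is correct and follows essentially the same route as the paper: the same telescoping identity and matrix-element formula $\braCket{j,k}{[E,a]}{j,k-1} = \sqrt{(j+k)(j-k+1)}\,(a_{k-1,k-1}-a_{kk})$ combined with Lemma~\ref{lm:ineq} for the upper bound, and an explicit diagonal hermitian element saturating the bound for the lower bound. The only (immaterial) difference is that your extremal matrix has consecutive differences saturated only for $m+1 \le k \le n$ and zero elsewhere, whereas the paper's element $\hat a$ in \eqref{eq:hata} saturates all of them at once, making $[E,\hat a]$ the full ladder operator of norm~$1$ --- a single universal element that the paper then reuses in later arguments.
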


\begin{proof}
If $a \in \sA_N$, then
\begin{align*}
\om_m(a) - \om_n(a)
&= \sum_{k=m+1}^n \braCket{j,k-1}{a}{j,k-1} - \braCket{j,k}{a}{j,k}
\\
&= \sum_{k=m+1}^n \frac{1}{\sqrt{(j + k)(j - k + 1)}}\,
\braCket{j,k}{[E,a]}{j,k-1}.
\end{align*}
Using Lemma~\ref{lm:ineq}, we get the estimate
$$
\bigl| \braCket{j,k}{[E,a]}{j,k-1} \bigr| \leq \|[E,a]\|
\leq \|[D_N,a]\|
$$
which shows that the left hand side of~\eqref{eq:basis} is no greater
than the right hand side. On the other hand, let $\hat a$ be the
self-adjoint diagonal operator:
\begin{equation}
\label{eq:hata} 
\hat a\,\ket{j,m} := - \biggl( \sum_{k=-j+1}^m
\frac{1}{\sqrt{(j + k)(j - k + 1)}} \biggr) \ket{j,m}.
\end{equation}
The coefficients are chosen so that
$[E,\hat a]\,\ket{j,m} = \ket{j,m+1}$ for $m = -j,\dots, j-1$. Notice 
that $\hat a\,\ket{j,-j} = 0$ and $[E,\hat a]\,\ket{j,-j} = 0$. Since
$\hat a = \hat a^*$, Lemma~\ref{lm:ineq} then shows that
$\|[D_N,\hat a]\| = \|[E,\hat a]\| = 1$. Therefore,
$$
d_N(\om_m,\om_n) \geq \om_m(\hat a) - \om_n(\hat a)
= \sum_{k=m+1}^n \frac{1}{\sqrt{(j + k)(j - k + 1)}} \,.
\eqno \qed
$$
\hideqed
\end{proof}

Note that the distance is additive on the chain of basic vector
states: $d_N(\om_m,\om_n) = \sum_{k=m+1}^n d_N(\om_{k-1}, \om_k)$.

\begin{corl} 
\label{cr:north-south}
For any $N$, the distance between the north and south poles of the
fuzzy sphere is:
\begin{equation}
\label{eq:diameter} 
d_N(\psi^N_{(0,0)}, \psi^N_{(0,\pi)})
= \sum_{k=1}^N \frac{1}{\sqrt{k(N - k + 1)}} \,.
\end{equation}
\end{corl}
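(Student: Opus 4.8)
The plan is to recognize~\eqref{eq:diameter} as a direct specialization of Proposition~\ref{pr:basis}, once the two polar coherent states are identified with extreme-weight basic vector states. First I would evaluate the Bloch coherent-state vector~\eqref{eq:Bloch} at the two poles. Setting $\th = 0$ makes every factor $(\sin\tfrac{\th}{2})^{j+m}$ vanish except the one with $j + m = 0$, so the sum collapses to its single surviving term (with coefficient $\binom{2j}{0}^{1/2} = 1$) and $\cket{0,0}_N = \ket{j,-j}$; hence $\psi^N_{(0,0)} = \om_{-j}$. Similarly, setting $\th = \pi$ kills every factor $(\cos\tfrac{\th}{2})^{j-m}$ except the one with $j - m = 0$, giving $\cket{0,\pi}_N = \ket{j,j}$ (the phase $e^{-\ii m\vf}$ being trivial at $\vf = 0$), so that $\psi^N_{(0,\pi)} = \om_j$.

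With these identifications in hand, the desired distance is exactly $d_N(\om_{-j},\om_j)$, which Proposition~\ref{pr:basis} (applied with $m = -j$ and $n = j$) evaluates as
$$
d_N(\om_{-j},\om_j) = \sum_{k=-j+1}^{j} \frac{1}{\sqrt{(j+k)(j-k+1)}}.
$$
The only remaining step is a change of summation index. Putting $k' := j + k$ sends the range $k = -j+1,\dots,j$ to $k' = 1,\dots,2j = N$, while $j + k = k'$ and $j - k + 1 = N - k' + 1$; substituting these turns the sum into $\sum_{k'=1}^{N} \bigl( k'(N - k' + 1) \bigr)^{-1/2}$, which is precisely the right-hand side of~\eqref{eq:diameter}.

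There is no serious obstacle here: the statement is a corollary in the strict sense, and all the analytic work has already been carried out in Proposition~\ref{pr:basis}. The only points requiring a modicum of care are the collapse of~\eqref{eq:Bloch} at the poles (checking that exactly one binomial term survives in each case) and the bookkeeping in the reindexing, both of which are routine. As a sanity check, one may note that the summand is invariant under $k' \mapsto N - k' + 1$, reflecting the expected north--south symmetry of the polar distance.
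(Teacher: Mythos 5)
Your proof is correct and follows essentially the same route as the paper: identify $\cket{0,0}_N = \ket{j,-j}$ and $\cket{0,\pi}_N = \ket{j,j}$, so the polar coherent states are the basic vector states $\om_{-j}$ and $\om_j$, then apply Proposition~\ref{pr:basis} with $m=-j$, $n=j$ and reindex the sum by $k' = j+k$. The only difference is that you verify the collapse of the Bloch sum~\eqref{eq:Bloch} at the poles explicitly, where the paper simply invokes it ``by construction''.
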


\begin{proof}
By construction, the Bloch state vectors at the poles are basis 
vectors: $\cket{0,0}_N = \ket{j,-j}$ and $\cket{0,\pi}_N = \ket{j,j}$.
Therefore, $\psi^N_{(0,0)} = \om_{-j}$ and $\psi^N_{(0,\pi)} = \om_j$.
{}From~\eqref{eq:basis} we get~\eqref{eq:diameter}, since the left 
hand side is just $d_N(\om_{-j}, \om_j)$.
\end{proof}


\subsection{An auxiliary distance} 
\label{ssc:aux-dist}

Let $\sB_N \subset \sA_N$ be the subalgebra of \textit{diagonal}
matrices. Note that if $a$ is diagonal, then
$\psi_{(\vf,\th)}^N(a) = \psi_{(0,\th)}^N(a)$ for any~$\vf$. Define the 
distance
\begin{equation}
\label{eq:aux-dist} 
\rho_N(\th) := \sup \bigl\{ 
\bigl| \psi_{(0,\th)}^N(a) - \psi_{(0,0)}^N(a) \bigr|
: a =a ^* \in \sB_N, \ \|[D_N,a]\| \leq 1 \bigr\}.
\end{equation}

\begin{prop} 
\label{pr:rho-N}
For any $0 \leq \th \leq \pi$, $\rho_N(\th)$ is given by:
\begin{equation}
\label{eq:rho-N} 
\rho_N(\th) = \sum_{n=1}^N \binom{N}{n}
(\sin\tfrac{\th}{2})^{2n} (\cos\tfrac{\th}{2})^{2(N-n)}
\sum_{k=1}^n \frac{1}{\sqrt{k(N - k + 1)}}\,.
\end{equation}
\end{prop}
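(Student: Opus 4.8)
The plan is to restrict from the outset to diagonal hermitian $a \in \sB_N$ and to reduce the supremum in~\eqref{eq:aux-dist} to an elementary coordinatewise maximization. Writing $a\ket{j,m} = a_m\ket{j,m}$ with $a_m \in \bR$, and evaluating the expectation in the coherent vector~\eqref{eq:Bloch} at $\vf = 0$, I get, after the substitution $n = j + m$ (recall $N = 2j$),
\[
\psi^N_{(0,\th)}(a) = \sum_{n=0}^N p_n(\th)\,a_{n-j}, \qquad
p_n(\th) := \binom{N}{n}(\sin\tfrac{\th}{2})^{2n}(\cos\tfrac{\th}{2})^{2(N-n)}.
\]
The $p_n(\th)$ are the weights of a binomial distribution, so $\sum_{n=0}^N p_n(\th) = 1$; and at $\th = 0$ only the $n = 0$ term survives, giving $\psi^N_{(0,0)}(a) = a_{-j} = \om_{-j}(a)$ (the Bloch vector at the pole is $\ket{j,-j}$). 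Subtracting and using $\sum_n p_n = 1$ collapses the $n = 0$ term, so that
\[
\psi^N_{(0,\th)}(a) - \psi^N_{(0,0)}(a)
= \sum_{n=1}^N p_n(\th)\,\bigl(\om_{n-j}(a) - \om_{-j}(a)\bigr).
\]

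Next I would linearize the bracket in the \emph{superdiagonal} entries of $[E,a]$. Reusing the telescoping identity from the proof of Proposition~\ref{pr:basis}, namely $a_{k-1} - a_k = \braCket{j,k}{[E,a]}{j,k-1}/\sqrt{(j+k)(j-k+1)}$, and writing $c_i := \braCket{j,i-j}{[E,a]}{j,i-j-1}$ for $i = 1,\dots,N$, the reindexing $i = j+k$ (so $(j+k)(j-k+1) = i(N-i+1)$) gives $\om_{n-j}(a) - \om_{-j}(a) = -\sum_{i=1}^n c_i/\sqrt{i(N-i+1)}$. Substituting this and interchanging the two finite sums yields the linear form
\[
\psi^N_{(0,\th)}(a) - \psi^N_{(0,0)}(a)
= -\sum_{i=1}^N \frac{P_i(\th)}{\sqrt{i(N-i+1)}}\,c_i, \qquad
P_i(\th) := \sum_{n=i}^N p_n(\th),
\]
with $P_i(\th)$ the binomial tail, manifestly nonnegative.

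The crucial step, and the only one that is not bookkeeping, is to read off the constraint $\|[D_N,a]\| \le 1$ in these coordinates. For diagonal $a$ the matrix $[E,a]$ sends $\ket{j,m} \mapsto c_{m+j+1}\ket{j,m+1}$, i.e.\ it is a weighted shift; hence $[E,a]^*[E,a]$ is diagonal and $\|[E,a]\| = \max_i |c_i|$. By Lemma~\ref{lm:ineq}, for diagonal hermitian $a$ one has $\|[D_N,a]\| = \|[E,a]\| = \max_i |c_i|$, so the side condition decouples into the independent constraints $|c_i| \le 1$. Moreover the $c_i$ are genuinely free: any prescribed real values are realized by a diagonal hermitian $a$ (unique up to an additive scalar, which drops out of the difference), the choice $c_i \equiv 1$ being precisely the operator $\hat a$ of~\eqref{eq:hata}.

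Finally I would maximize. Since every coefficient $P_i(\th)/\sqrt{i(N-i+1)}$ is nonnegative, the supremum of the absolute value of the linear form over the cube $\{|c_i|\le 1\}$ is attained at $c_i \equiv -1$ (equivalently $c_i \equiv 1$), giving
\[
\rho_N(\th) = \sum_{i=1}^N \frac{P_i(\th)}{\sqrt{i(N-i+1)}}.
\]
Interchanging the two sums once more, now with $P_i(\th) = \sum_{n\ge i} p_n(\th)$ and $n$ as the outer index, reproduces the claimed formula~\eqref{eq:rho-N}. The expected obstacle is precisely the identification $\|[D_N,a]\| = \max_i|c_i|$ for diagonal $a$: once the weighted-shift structure is recognized the optimization is immediate, and the rest is the two summation interchanges and the reindexings $n = j+m$, $i = j+k$.
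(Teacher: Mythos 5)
Your proof is correct and is essentially the paper's own argument: both expand $\psi^N_{(0,\th)}$ as a binomial mixture of the basis states $\om_m$, telescope the differences $\om_{n-j}(a)-\om_{-j}(a)$ through the superdiagonal entries of $[E,a]$, invoke Lemma~\ref{lm:ineq} to identify the constraint for diagonal hermitian $a$, and saturate with (a sign of) the element $\hat a$ of~\eqref{eq:hata}. The only real difference is one of packaging: the paper obtains the upper bound by citing the distance formula of Prop.~\ref{pr:basis} term by term and observing that the single element $\hat a$ saturates all those bounds simultaneously, whereas you make the same optimization explicit as the maximization of a nonnegative-coefficient linear form over the cube $\max_i |c_i| \le 1$.
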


\begin{proof}
Let $a = (\delta_{mn} c_m) \in \sB_N$, with $c_m \in \bR$. Then
$\om_m(a) = c_m$, which gives
$$
\psi^N_{(0,0)}(a) - \psi^N_{(0,\th)}(a)
= \sum_{m=-j}^j \binom{2j}{j+m} (\sin\tfrac{\th}{2})^{2(j+m)}
(\cos\tfrac{\th}{2})^{2(j-m)} (\om_{-j}(a) - \om_m(a)).
$$
We also know that
$$
\om_{-j}(a) - \om_m(a) \leq d_N(\om_m, \om_{-j})
= \sum_{m'=-j+1}^m \frac{1}{\sqrt{(j + m')(j - m' + 1)}}
$$
for all $a$ with $\|[D_N,a]\| \leq 1$, with the supremum saturated on
the diagonal element $\hat a$ given by~\eqref{eq:hata}. On
substituting $n = j + m$ and $k = j + m'$, we arrive 
at~\eqref{eq:rho-N}.
\end{proof}

\begin{lemma} 
\label{lm:deriv}
The derivative $\rho'_N(\th)$ of \eqref{eq:rho-N} satisfies
$0 \leq \rho'_N(\th) \leq 1$.
\end{lemma}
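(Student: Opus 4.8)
The plan is to pass to the variable $s := \sin^2(\tfrac\theta2)$, with $c := \cos^2(\tfrac\theta2) = 1 - s$, so that as $\theta$ runs over $[0,\pi]$ the pair $(s,c)$ stays on the segment $s,c \ge 0$, $s + c = 1$. Writing $S_n := \sum_{k=1}^n \frac{1}{\sqrt{k(N-k+1)}}$ (with $S_0 = 0$), formula \eqref{eq:rho-N} reads $\rho_N = \sum_{n=0}^N \binom Nn s^n c^{N-n} S_n$. Since $\frac{ds}{d\theta} = \frac12\sin\theta \ge 0$ on $[0,\pi]$, it suffices to analyse $\frac{d\rho_N}{ds}$ and restore the chain-rule factor at the end.

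First I would differentiate term by term in $s$ (treating $c = 1-s$). Using the elementary identity $\frac{d}{ds}\bigl[\binom Nn s^n c^{N-n}\bigr] = N\bigl(\binom{N-1}{n-1} s^{n-1} c^{N-n} - \binom{N-1}{n} s^n c^{N-n-1}\bigr)$, an Abel summation against the $S_n$ telescopes, the boundary contributions vanishing since $S_0 = 0$ and the extreme binomial coefficients are zero; only the first differences $S_{n+1} - S_n = \frac{1}{\sqrt{(n+1)(N-n)}}$ survive. After reindexing one obtains the compact expression
\[
\frac{d\rho_N}{ds} = \sum_{m=1}^N \binom Nm s^{m-1} c^{N-m}\,\frac{\sqrt m}{\sqrt{N-m+1}},
\]
whose summands are all nonnegative for $s,c \ge 0$. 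Combined with $\frac{ds}{d\theta} \ge 0$ this gives $\rho_N'(\theta) \ge 0$, settling the easy half.

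For the upper bound I would use $\tfrac12\sin\theta = \sin(\tfrac\theta2)\cos(\tfrac\theta2) = \sqrt{sc}$ on $[0,\pi]$, so that
\[
\rho_N'(\theta) = \sqrt{sc}\,\frac{d\rho_N}{ds}
= \sum_{m=1}^N \binom Nm s^{m-1/2} c^{N-m+1/2}\,\frac{\sqrt m}{\sqrt{N-m+1}}.
\]
The key move is to factor each summand as $\bigl(\binom Nm s^m c^{N-m}\bigr)^{1/2}\cdot\bigl(\binom Nm \tfrac{m}{N-m+1}\, s^{m-1} c^{N-m+1}\bigr)^{1/2}$ and apply the Cauchy--Schwarz inequality. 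The first resulting sum is $\sum_{m=1}^N \binom Nm s^m c^{N-m} = 1 - c^N \le 1$. For the second, the combinatorial identity $\binom Nm \tfrac{m}{N-m+1} = \binom{N}{m-1}$ turns it, after the shift $j = m-1$, into $\sum_{j=0}^{N-1}\binom Nj s^j c^{N-j} = 1 - s^N \le 1$. Hence $\rho_N'(\theta) \le \sqrt{(1-c^N)(1-s^N)} \le 1$.

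I expect the main obstacle to be the choice of Cauchy--Schwarz split in the last step: one must guess that the single weight $\sqrt{m/(N-m+1)}$ should be distributed between the two factors so that each becomes exactly a truncated binomial sum, and must spot the identity $\binom Nm \frac{m}{N-m+1} = \binom N{m-1}$ that makes the second factor collapse. The telescoping in the second step also requires some care with the vanishing boundary terms, but is otherwise routine.
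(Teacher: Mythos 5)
Your proof is correct, but it follows a genuinely different route from the paper's. You work directly with the closed formula \eqref{eq:rho-N}: substituting $s=\sin^2(\th/2)$, telescoping the derivative via Abel summation and the identities $n\binom{N}{n}=N\binom{N-1}{n-1}$, $(N-n)\binom{N}{n}=N\binom{N-1}{n}$, and then bounding the resulting positive sum by Cauchy--Schwarz together with $\binom{N}{m}\frac{m}{N-m+1}=\binom{N}{m-1}$; all of these steps check out. The paper instead argues operator-theoretically: since the supremum in \eqref{eq:aux-dist} is attained, for every $\th$ at once, by the single element $\hat a$ of \eqref{eq:hata}, one has $\rho_N(\th)=\psi^N_{(0,\th)}(\hat a)-\psi^N_{(0,0)}(\hat a)$, and the coherent-state identity \eqref{eq:psi-Ea} gives $\rho'_N(\th)=\psi^N_{(0,\th)}([E,\hat a])$; the upper bound is then immediate from $\|[E,\hat a]\|=1$ (Lemma~\ref{lm:ineq} and the fact that states have norm one), and nonnegativity follows because $[E,\hat a]$ is the ladder operator $\ket{j,m}\mapsto\ket{j,m+1}$, whose coherent-state expectation is a manifestly nonnegative sum. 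The paper's argument is shorter and reuses machinery already in place, making transparent \emph{why} the bound is $1$ (it is the norm of the optimal commutator); your argument is self-contained at the level of the formula \eqref{eq:rho-N}, requires no operator theory, and yields as a bonus the strictly sharper quantitative estimate $\rho'_N(\th)\leq\sqrt{(1-\sin^{2N}(\th/2))(1-\cos^{2N}(\th/2))}$, which in particular shows $\rho'_N(0)=\rho'_N(\pi)=0$. One cosmetic remark: your explicit expression for $d\rho_N/ds$ is, after the index shift $m=n+1$ and the identity $\binom{N}{n+1}\sqrt{\tfrac{n+1}{N-n}}=\binom{N}{n}^{1/2}\binom{N}{n+1}^{1/2}$, exactly the paper's formula for $\rho'_N(\th)$ divided by the chain-rule factor $\sqrt{sc}$, so the two computations agree term by term.
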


\begin{proof}
{}From \eqref{eq:psi-Ea} we deduce that
$\psi^N_{(0,\th)}([E,a]) = \pd{}{\th} \psi^N_{(0,\th)}(a)$ for all
$a \in \sB_N$. Using this relation and the equality
$\rho_N(\th) = \psi^N_{(0,\th)}(\hat a) - \psi^N_{(0,0)}(\hat a)$,
with $\hat a$ the element in~\eqref{eq:hata}, we get:
$$
\rho'_N(\th) = \pd{}{\th} \psi^N_{(0,\th)}(\hat a)
= \psi^N_{(0,\th)}([E, \hat a]).
$$
Since states are functionals with norm~$1$, it follows that
$$
|\rho'_N(\th)| = |\psi^N_{(0,\th)}([E,\hat{a}])| \leq  
\psi^N_{(0,\th)}(1)\, \|[E,\hat{a}]\| = 1.
$$
On the other hand, since $L := [E, \hat a]$ is the ladder operator
$\ket{j,m} \mapsto \ket{j,m+1}$, we get
$$
\rho'_N(\th) = \cbraCket{0,\th}{L}{0,\th}_N
= \sum_{m=-j}^{j-1} \binom{2j}{j+m}^{\!\half}
\binom{2j}{j+m+1}^{\!\half} (\sin\tfrac{\th}{2})^{2j+2m+1}
(\cos\tfrac{\th}{2})^{2j-2m-1} \geq 0.
$$
Actually, we see that $\rho'_N(\th) > 0$ for $0 < \th < \pi$.
\end{proof}

The previous lemma has two consequences: $\rho_N(\th)$ 
is strictly increasing on $0 \leq \th \leq \pi$, for fixed~$N$; and, 
for $0 < \th \leq \pi$ the mean value theorem gives $\phi$ 
with $0 < \phi < \th$ such that 
$$
\rho_N(\th) = \rho_N(\th) - \rho_N(0) = \th\,\rho'_N(\phi) \leq \th.
$$
That is: $\rho_N(\th)$ is no greater than the geodesic distance on the
circle.


\subsection{$SU(2)$-invariance of the distance} 
\label{ssc:d-round}

\begin{lemma} 
\label{lm:d-round}
The distance function $d_N(\psi^N_{(\vf,\th)},\psi^N_{(\vf',\th')})$
is $SU(2)$-invariant.
\end{lemma}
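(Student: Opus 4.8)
The plan is to exploit the equivariance of the irreducible spectral triple together with the fact that the Bloch coherent states are rigid rotations of a fixed ground-state vector. The relevant $SU(2)$-action on a pair of points $(\vf,\th),(\vf',\th')\in\bS^2$ is the diagonal rotation action, and I want to show that $d_N$ is constant along its orbits.

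First I would record how the coherent states transform. Since the map $\bS^2\to V_j$, $(\vf,\th)\mapsto\cket{\vf,\th}_N$, intertwines the rotation action of $SU(2)$ on $\bS^2$ with the irrep $\pi_j$, any $h\in SU(2)$ carrying $(\vf,\th)$ to the rotated point $R_h(\vf,\th)$ satisfies $\cket{R_h(\vf,\th)}_N=\pi_j(h)\cket{\vf,\th}_N$ up to a phase; as the vector state depends only on the ray, the phase is irrelevant. Writing $a^g:=\pi_j(g)\,a\,\pi_j(g)^*$ for the $SU(2)$-action on $\sA_N$ and using $\pi_j(h)^*=\pi_j(h^{-1})$, this yields
$$
\psi^N_{R_h(\vf,\th)}(a)
= \cbraCket{\vf,\th}{\pi_j(h)^*a\,\pi_j(h)}{\vf,\th}_N
= \psi^N_{(\vf,\th)}\bigl(a^{h^{-1}}\bigr), \qquad a\in\sA_N.
$$

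The key step is to show that the seminorm $a\mapsto\|[D_N,a]\|$ is invariant under each automorphism $a\mapsto a^g$. Set $U_g:=\pi_j(g)\ox\pi_\half(g)$, a unitary on $\sH_N=V_j\ox\bC^2$. Equivariance (Prop.~\ref{pr:DN-irred}(i)) means precisely that $U_g$ commutes with $D_N$, while the intertwining relation in the proof of that proposition gives $U_g\,(a\ox 1)\,U_g^*=a^g\ox 1$. Combining these with $U_gD_N=D_NU_g$ produces
$$
[D_N,\,a^g\ox 1]=U_g\,[D_N,\,a\ox 1]\,U_g^*,
$$
so that $\|[D_N,a^g]\|=\|[D_N,a]\|$ because $U_g$ is unitary. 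Hence $a\mapsto a^{h^{-1}}$ is a bijection of the self-adjoint unit ball $\set{a=a^*:\|[D_N,a]\|\le 1}$ onto itself.

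Finally I would change variables in the supremum defining $d_N$. Setting $b=a^{h^{-1}}$ and using the transformation law,
$$
d_N\bigl(\psi^N_{R_h(\vf,\th)},\psi^N_{R_h(\vf',\th')}\bigr)
= \sup_{\substack{a=a^*\\ \|[D_N,a]\|\le 1}}
\bigl|\psi^N_{(\vf,\th)}(a^{h^{-1}})-\psi^N_{(\vf',\th')}(a^{h^{-1}})\bigr|
= d_N\bigl(\psi^N_{(\vf,\th)},\psi^N_{(\vf',\th')}\bigr),
$$
the last equality holding because $b$ ranges over exactly the same constraint set as $a$. This is the asserted $SU(2)$-invariance. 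The only genuine obstacle is the seminorm-invariance step, but it reduces directly to the equivariance already established; the remainder is bookkeeping about phases and a single change of variable.
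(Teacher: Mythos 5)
Your proposal is correct and follows essentially the same route as the paper: both establish that the unitary $u = \pi_j(g)\ox\pi_\half(g)$ commutes with $D_N$ and implements $a\mapsto a^g$, deduce invariance of the seminorm $a\mapsto\|[D_N,a]\|$, and conclude by a change of variable in the supremum. The only cosmetic difference is that the paper proves invariance for arbitrary pairs of states under $g_*\om(a):=\om(a^g)$ and then remarks that this action matches the rotation action on coherent states, whereas you verify the coherent-state transformation law (with the phase caveat) up front.
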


\begin{proof}
Up to now, we have identified the element $a\in \sA_N \isom \End(V_j)$
with the operator $a \ox 1_2$ acting on $\sH_N = V_j \ox V_\half$. In
this proof, we shall write explicitly $a \ox 1_2$ to avoid
ambiguities.

For any $g \in SU(2)$ and $a \in \sA_N$, we write
$a^g := \pi_j(g) a \pi_j(g)^*$. Since
$\pi_\half(g) \pi_\half(g)^* = 1_2$ by unitarity of $\pi_\half$, we 
get
$$
a^g \ox 1_2 = u(a \ox 1_2)u^*
\word{where} u := \pi_j(g) \ox \pi_\half(g).
$$
Since $D_N$ commutes with $u$, the operator
$[D_N, a^g \ox 1_2] = u[D_N, a \ox 1_2]u^*$ has the same norm as
$[D_N, a \ox 1_2]$.

Given a state $\om$ on~$\sA_N$ and $g \in SU(2)$, let $g_*\om$ be the
state defined by $g_*\om(a) = \om(a^g)$. For any pair of states
$\om,\om'$, we then obtain
\begin{align*}
d_N(g_*\om,g_*\om') &= \sup_{a\in\sA_N}
\set{|\om(a^g) - \om'(a^g)| : \|[D_N, a \ox 1_2]\| \leq 1}
\\
&= \sup_{b\in\sA_N}
\set{|\om(b) - \om'(b)| : \|[D_N, b \ox 1_2]\| \leq 1} = d_N(\om,\om'),
\end{align*}
where we have put $b = a^g$ and used
$\|[D_N, a^g \ox 1_2]\| = \|[D_N, a \ox 1_2]\|$. By construction, the
action $\psi^N_{(\vf,\th)} \mapsto g_*\psi^N_{(\vf,\th)}$ corresponds
to the usual rotation action of $SU(2)$ on $\bS^2$.
\end{proof}


\subsection{Dependence on the dimension} 
\label{ssc:N-depend}

We now show that the distance
$d_N(\psi^N_{(\vf,\th)},\psi^N_{(\vf',\th')})$ is non-decreasing
with~$N$. Using the fuzzy spinor basis~\eqref{eq:spinors}, one defines
injections $U_j^\pm \: V_{j\pm\half} \to V_j \ox V_\half$ by
$$
U_j^+ \ket{j + \half, m + \half} := \kket{j,m}_+ \,,  \qquad
U_j^- \ket{j - \half, m + \half} := \kket{j,m}_- \,,
$$
using the same index sets as in~\eqref{eq:spinors}, namely
$m = -j-1, \dots, j$ for the range of~$U_j^+$ and $m = -j, \dots, j-1$
for the range of~$U_j^-$. One easily checks that these $U_j^\pm$ are
isometries, i.e., $(U^\pm_j)^*U^\pm_j = 1$, that intertwine the
representations of~$\gsu(2)$. Also, $V_j \ox V_\half$ is the
orthogonal direct sum of the ranges of $U_j^+$ and~$U_j^-$.

\begin{lemma} 
\label{lm:cohere}
$U^+_j\cket{\vf,\th}_{N+1}  = \cket{\vf,\th}_N \ox \cket{\vf,\th}_1$
for any $(\vf,\th) \in \bS^2$.
\end{lemma}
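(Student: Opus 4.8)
The plan is to avoid a term-by-term expansion and instead exploit that the Bloch vectors are group translates of a single lowest-weight vector. Writing $R_{(\vf,\th)} \in SU(2)$ for the rotation of~\cite{ACGT72} used in~\eqref{eq:Bloch}, one has $\cket{\vf,\th}_N = \pi_j(R_{(\vf,\th)})\ket{j,-j}$, and with the \emph{same} group element also $\cket{\vf,\th}_{N+1} = \pi_{j+\half}(R_{(\vf,\th)})\ket{j+\half,-j-\half}$ and $\cket{\vf,\th}_1 = \pi_\half(R_{(\vf,\th)})\tbinom{0}{1}$. Since $U_j^+$ is an isometry intertwining the $\gsu(2)$-representations $\pi_{j+\half}$ and $(\pi_j \ox \pi_\half)\Dl$ (as recorded just above the lemma), and $SU(2)$ is connected, it intertwines the integrated group representations as well:
$$
U_j^+\,\pi_{j+\half}(g) = \bigl(\pi_j(g) \ox \pi_\half(g)\bigr)\,U_j^+, \word{for all} g \in SU(2).
$$

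Granting this, the whole statement collapses to the base point $(\vf,\th) = (0,0)$. There the factor $(\sin\tfrac{\th}{2})^{j+m}$ in~\eqref{eq:Bloch} annihilates every term except the extreme weight, so $\cket{0,0}_N = \ket{j,-j}$, $\cket{0,0}_{N+1} = \ket{j+\half,-j-\half}$ and $\cket{0,0}_1 = \tbinom{0}{1}$. On the other side, the defining relation $U_j^+\ket{j+\half,m+\half} = \kket{j,m}_+$ at $m = -j-1$, together with the vanishing of the coefficient $\sqrt{(j+m+1)/(2j+1)}$ in~\eqref{eq:spinors} at that value, gives $U_j^+\ket{j+\half,-j-\half} = \kket{j,-j-1}_+ = \ket{j,-j}\ox\tbinom{0}{1}$, which is exactly $\cket{0,0}_N \ox \cket{0,0}_1$. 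Conceptually this is forced, since $U_j^+$ must carry the lowest-weight vector of $V_{j+\half}$ to the lowest-weight vector of $V_j \ox V_\half$.

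Applying $U_j^+$ to $\cket{\vf,\th}_{N+1} = \pi_{j+\half}(R_{(\vf,\th)})\ket{j+\half,-j-\half}$ and pushing $U_j^+$ through the representation via the intertwining identity then yields
$$
U_j^+\cket{\vf,\th}_{N+1}
= \bigl(\pi_j(R_{(\vf,\th)}) \ox \pi_\half(R_{(\vf,\th)})\bigr)\bigl(\ket{j,-j}\ox\tbinom{0}{1}\bigr)
= \cket{\vf,\th}_N \ox \cket{\vf,\th}_1,
$$
as claimed. The only real obstacle is bookkeeping: fixing the phase and normalization conventions so that one and the same $R_{(\vf,\th)}$ produces all three coherent states, and matching the half-integer index shifts of~\eqref{eq:spinors} at the base point. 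Should one prefer a self-contained check, the identity also follows by brute force---expanding the left side through~\eqref{eq:spinors} and the right side as a tensor product, the $\tbinom{1}{0}$- and $\tbinom{0}{1}$-components agree by the elementary identities $\binom{2j+1}{j+n+1}\tfrac{j+n+1}{2j+1} = \binom{2j}{j+n}$ and $\binom{2j+1}{j+n}\tfrac{j-n+1}{2j+1} = \binom{2j}{j+n}$, both consequences of $(2j+1)!/(2j+1) = (2j)!$.
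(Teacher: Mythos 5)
Your proof is correct, and it takes a genuinely different route from the paper's. The paper proves the lemma by exactly the computation you relegate to your final remark: it writes out $\cket{\vf,\th}_1$ and expands both sides through \eqref{eq:Bloch} and \eqref{eq:spinors}; the two binomial identities you quote are precisely what that expansion reduces to, and they are correct. Your main argument is instead representation-theoretic: promote the $\gsu(2)$-intertwining of $U_j^+$ (recorded just above the lemma) to $SU(2)$-intertwining by connectedness, and collapse the whole family of identities to the single base point $(\vf,\th)=(0,0)$, where \eqref{eq:spinors} degenerates to $U_j^+\ket{j+\half,-j-\half}=\kket{j,-j-1}_+=\ket{j,-j}\ox\tbinom{0}{1}$. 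This buys conceptual clarity --- the lemma holds because an isometric intertwiner must send the lowest-weight vector to the lowest-weight vector (your explicit check also fixes the phase, which ``forced'' alone would leave ambiguous) and because the coherent family is equivariantly generated --- and the argument would survive replacing $SU(2)$ by another compact group; the paper's computation is shorter and assumes nothing beyond the stated formulas. The one step you wave off as ``bookkeeping'' is, however, the load-bearing premise and deserves an explicit line, because \eqref{eq:Bloch} is given as a closed formula, not as a definition by a rotation: you must exhibit a single $j$-independent $R_{(\vf,\th)}\in SU(2)$ such that $\pi_j(R_{(\vf,\th)})\ket{j,-j}$ equals \eqref{eq:Bloch} for every $j$. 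This does hold, e.g.\ for $R_{(\vf,\th)}=e^{-\ii\vf J_3}e^{\ii\th J_2}$: check it on $V_\half$ and use $V_j\isom\mathrm{Sym}^{2j}V_\half$ with $\ket{j,-j}=\ket{\half,-\half}^{\ox 2j}$, which reproduces the binomial coefficients and the phases $e^{-\ii m\vf}$ of \eqref{eq:Bloch}. Beware that with other common conventions (including, arguably, the original one of \cite{ACGT72}) the group translate differs from \eqref{eq:Bloch} by a $j$-dependent overall phase $e^{\ii j\vf}$; your identity would even survive that, since such phases are multiplicative across the tensor product, $e^{\ii(j+\half)\vf}=e^{\ii j\vf}\,e^{\ii\vf/2}$, but this is precisely the kind of mismatch that unexamined ``bookkeeping'' can conceal.
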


\begin{proof}
Note that $\cket{\vf,\th}_1
= e^{-\half\ii\vf} \sin \frac{\th}{2}\, \ket{\half,-\half}
+ e^{\half\ii\vf} \cos \frac{\th}{2}\, \ket{\half,\half}$. The rest is
an easy computation, using \eqref{eq:Bloch}.
\end{proof}

We define two injective linear maps
$$
\eta_N^\pm : \sA_N \to \sA_{N\pm 1} \,,  \qquad
\eta_N^\pm(a) := (U_j^\pm)^* (a \ox 1_2) U_j^\pm \,.
$$
They are unital and commute with the involution, but are neither
surjective nor algebra morphisms, since 
$U_j^+ (U_j^+)^* + U_j^- (U_j^-)^* = 1$. They are norm decreasing: the
norm of $a \ox 1_2$ on the range of $U^\pm_j$ is no greater than its
norm on $V_j \ox V_\half$, which equals the norm of $a$ on~$V_j$.

\begin{lemma} 
\label{lm:step}
For any $a \in \sA_N$,
\begin{equation}
\label{eq:step-psi} 
\psi^{N+1}_{(\vf,\th)} \circ \eta_N^+(a) = \psi^N_{(\vf,\th)}(a),
\end{equation}
and
\begin{equation}
\label{eq:step-D} 
\bigl\| [D_{N\pm 1}, \eta_N^\pm(a)] \bigr\|
\leq \bigl\| [D_N, a] \bigr\|.
\end{equation}
\end{lemma}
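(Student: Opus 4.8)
The plan is to prove the two assertions separately. Identity~\eqref{eq:step-psi} is essentially a repackaging of Lemma~\ref{lm:cohere}, whereas the norm bound~\eqref{eq:step-D} will follow by writing $[D_{N\pm1},\eta_N^\pm(a)]$ as the compression of $[D_N,a]$ through an isometry, so that the estimate reduces to the elementary fact that compressions do not increase operator norm.

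For \eqref{eq:step-psi} I would simply unfold the definitions. Evaluating the vector state $\psi^{N+1}_{(\vf,\th)}$ on $\eta_N^+(a) = (U_j^+)^*(a \ox 1_2)U_j^+$ gives
$$
\psi^{N+1}_{(\vf,\th)}\bigl(\eta_N^+(a)\bigr)
= \scal{U_j^+\cket{\vf,\th}_{N+1}}{(a \ox 1_2)\,U_j^+\cket{\vf,\th}_{N+1}}.
$$
Substituting $U_j^+\cket{\vf,\th}_{N+1} = \cket{\vf,\th}_N \ox \cket{\vf,\th}_1$ from Lemma~\ref{lm:cohere}, the matrix element factorizes across the two tensor legs; the second leg contributes $\scal{\cket{\vf,\th}_1}{\cket{\vf,\th}_1}$, which equals $1$ since the Bloch vectors in~\eqref{eq:Bloch} are normalized (a one-line binomial identity $(\sin^2\tfrac{\th}{2}+\cos^2\tfrac{\th}{2})^{2j}=1$). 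What remains is exactly $\psi^N_{(\vf,\th)}(a)$.

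For the norm bound the main device is the isometry $\Psi^\pm := U_j^\pm \ox 1_2 \: \sH_{N\pm1} = V_{j\pm\half}\ox\bC^2 \to (V_j \ox V_\half)\ox\bC^2$, where the last factor is the Dirac spinor. I would first verify that $\Psi^\pm$ intertwines $D_{N\pm1}$ with $\DD_\rho := (\rho \ox \pi_\half)(\DD)$, where $\rho := (\pi_j \ox \pi_\half)\Dl$ is the representation on $V_j \ox V_\half$ that realizes $\pi_{j\pm\half}$ inside $V_j \ox V_\half$. This is immediate from the equivariance of $\DD$ (its commuting with $\Dl$) together with the fact that $U_j^\pm$ intertwines $\pi_{j\pm\half}$ with the restriction of $\rho$: since $U_j^\pm$ touches only the first tensor leg of $\DD$ and leaves the spinor leg untouched, pushing it through yields $\Psi^\pm D_{N\pm1} = \DD_\rho\,\Psi^\pm$. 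Expanding $\DD_\rho$ then exhibits the splitting $\DD_\rho = D_N \ox 1 + \half\tsum_k \sg_k \ox \sg_k$, where $D_N$ acts on the $V_j$ and Dirac-spinor legs while the extra term acts on the two spin-$\half$ legs.

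The payoff is a clean commutator identity. With $A := a \ox 1 \ox 1$, so that $a$ acts only on the $V_j$ leg, one has $\eta_N^\pm(a) = (\Psi^\pm)^*A\,\Psi^\pm$, and from $\Psi^\pm D_{N\pm1} = \DD_\rho\,\Psi^\pm$ together with $(\Psi^\pm)^*\Psi^\pm = 1$ (both operators being self-adjoint) one gets
$$
[D_{N\pm1},\eta_N^\pm(a)] = (\Psi^\pm)^*\,[\DD_\rho, A]\,\Psi^\pm.
$$
Because $a$ commutes with the spinorial term $\half\tsum_k\sg_k\ox\sg_k$, the inner commutator collapses to $[\DD_\rho,A] = [D_N,a]\ox 1$. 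As $\Psi^\pm$ is an isometry, compression is norm-nonincreasing, whence $\bigl\|[D_{N\pm1},\eta_N^\pm(a)]\bigr\| \leq \bigl\|[D_N,a]\ox 1\bigr\| = \bigl\|[D_N,a]\bigr\|$, which is~\eqref{eq:step-D}. The one step requiring genuine care is the intertwining $\Psi^\pm D_{N\pm1} = \DD_\rho\,\Psi^\pm$: it rests on the equivariance of $\DD$, so that realizing $\pi_{j\pm\half}$ as a subrepresentation of $\pi_j\ox\pi_\half$ transports $D_{N\pm1}$ to $\DD_\rho$ without disturbing the untouched spinor leg. Everything else is routine bookkeeping of the three tensor factors.
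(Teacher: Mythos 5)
Your proof is correct and takes essentially the same route as the paper: \eqref{eq:step-psi} via Lemma~\ref{lm:cohere} and the normalization of the Bloch vectors, and \eqref{eq:step-D} via the $\gsu(2)$-intertwining property of $U_j^\pm$ combined with the fact that compression by an isometry does not increase operator norms. Your identity $[D_{N\pm1},\eta_N^\pm(a)] = (\Psi^\pm)^*\bigl([D_N,a]\ox 1\bigr)\Psi^\pm$ is precisely the paper's $[D_{N\pm 1}, \eta_N^\pm(a)] = \eta_N^\pm\bigl([D_N,a]\bigr)$ with $\eta_N^\pm$ applied entrywise to $[D_N,a]\in M_2(\sA_N)$, merely organized through the abstract Dirac element and the splitting $\DD_\rho = D_N \ox 1 + \half\tsum_k \sg_k \ox \sg_k$ rather than generator by generator via the matrix form~\eqref{eq:DN-matrix} (note only the intertwining of $U_j^\pm$ is needed for the transport step, not the equivariance of $\DD$ itself).
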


\begin{proof}
The equality~\eqref{eq:step-psi} follows from Lemma \ref{lm:cohere},
because
$$
\cbraCket{\vf,\th}{\eta_N^+(a)}{\vf,\th}_{N+1}
= \cbraCket{\vf,\th}{a}{\vf,\th}_N \, \cbraket{\vf,\th}{\vf,\th}_1
= \cbraCket{\vf,\th}{a}{\vf,\th}_N \,.
$$

Since $U_j^\pm$ intertwines representations of $\gsu(2)$, i.e.,
$$
U_j^\pm X = (X \ox 1_2 + 1_2 \ox X) U_j^\pm
\word{for all} X \in \gsu(2)
$$
(the representation symbols are omitted), we conclude that
$$
[X, \eta_N^\pm(a)] = (U_j^\pm)^* \bigl( [X,a] \ox 1_2 \bigr) U_j^\pm
= \eta_N^\pm\bigl( [X,a] \bigr).
$$
In view of~\eqref{eq:DN-matrix}, therefore,
$[D_{N\pm 1}, \eta_N^\pm(a)] = \eta_N^\pm \bigl( [D_N, a] \bigr)$,
where $[D_N,a] \in M_2(\sA_N)$ and we extend $\eta_N^\pm$ from $\sA_N$
to $M_2(\sA_N)$ by applying it to each matrix entry. Since both
$\eta_N^\pm$ are norm-decreasing maps, this proves~\eqref{eq:step-D}.
\end{proof}

\begin{prop} 
\label{pr:dN-grows}
For any $N \geq 1$, the following majorization holds:
$$
d_{N+1}(\psi^{N+1}_{(\vf,\th)}, \psi^{N+1}_{(\vf',\th')})
\geq d_N(\psi^N_{(\vf,\th)}, \psi^N_{(\vf',\th')}).
$$
\end{prop}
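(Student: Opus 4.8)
The plan is to exploit directly the two properties of the map $\eta_N^+$ recorded in Lemma~\ref{lm:step}: read together, they say that $\eta_N^+$ transports an admissible test element from level $N$ to an admissible test element at level $N+1$ while preserving the relevant difference of states. The whole argument is then a comparison of the two suprema that define the spectral distances through~\eqref{eq:metric}.

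First I would fix an arbitrary self-adjoint $a \in \sA_N$ obeying the Lipschitz bound $\bigl\|[D_N,a]\bigr\| \leq 1$, and push it up to $\eta_N^+(a) \in \sA_{N+1}$. Since $\eta_N^+$ commutes with the involution, $\eta_N^+(a)$ is again self-adjoint, hence a legitimate competitor in the supremum~\eqref{eq:metric} defining $d_{N+1}$. I would then check the constraint at the higher level: the norm estimate~\eqref{eq:step-D} gives $\bigl\|[D_{N+1},\eta_N^+(a)]\bigr\| \leq \bigl\|[D_N,a]\bigr\| \leq 1$, so $\eta_N^+(a)$ is admissible.

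Next, applying the pullback identity~\eqref{eq:step-psi} at the two base points $(\vf,\th)$ and $(\vf',\th')$ yields
\begin{equation*}
\bigl| \psi^{N+1}_{(\vf,\th)}(\eta_N^+(a)) - \psi^{N+1}_{(\vf',\th')}(\eta_N^+(a)) \bigr|
= \bigl| \psi^N_{(\vf,\th)}(a) - \psi^N_{(\vf',\th')}(a) \bigr|.
\end{equation*}
As $\eta_N^+(a)$ is admissible at level $N+1$, the left-hand side is a lower bound for $d_{N+1}(\psi^{N+1}_{(\vf,\th)},\psi^{N+1}_{(\vf',\th')})$; taking the supremum of the right-hand side over all self-adjoint $a \in \sA_N$ with $\bigl\|[D_N,a]\bigr\| \leq 1$ then delivers the claimed inequality $d_{N+1} \geq d_N$.

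I do not expect a genuine obstacle at this stage, since all the hard analytic work has been front-loaded into Lemma~\ref{lm:step} (and, beneath it, the intertwining isometries $U_j^\pm$ and the coherent-state factorization of Lemma~\ref{lm:cohere}). The one point to handle carefully is the logic of comparing suprema: $\eta_N^+$ is not surjective, so its image is only a proper self-adjoint subset of $\sA_{N+1}$, but this can only enlarge the level-$(N+1)$ supremum, so it is exactly what is needed for a lower bound. It is also worth noting that we never require $\bigl\|[D_{N+1},\eta_N^+(a)]\bigr\|$ to equal $1$; admissibility alone suffices, since any slack in the constraint would only improve the bound.
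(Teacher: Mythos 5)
Your proof is correct and follows essentially the same route as the paper's: both arguments rest entirely on Lemma~\ref{lm:step}, using $\eta_N^+$ to transport an admissible self-adjoint test element from level $N$ to level $N+1$ (with \eqref{eq:step-D} guaranteeing admissibility and \eqref{eq:step-psi} preserving the difference of coherent states), and then comparing the two suprema in \eqref{eq:metric}. Your explicit remark that $\eta_N^+$ preserves self-adjointness is a small point the paper leaves implicit, but the logic is identical.
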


\begin{proof}
We get directly:
\begin{align*}
d_{N+1}(\psi^{N+1}_{(\vf,\th)}, \psi^{N+1}_{(\vf',\th')})
&= \sup_{a\in\sA_{N+1}} \bigl\{\, \bigl| \psi^{N+1}_{(\vf,\th)}(a) 
- \psi^{N+1}_{(\vf',\th')}(a) \bigr| 
: \|[D_{N+1}, a]\| \leq 1 \,\bigr\}
\\
&\geq \sup_{a\in\sA_N}
\bigl\{\, \bigl| \psi^{N+1}_{(\vf,\th)} \circ \eta_N^+(a)
- \psi^{N+1}_{(\vf',\th')} \circ \eta_N^+(a) \bigr|
: \bigl\| [D_{N+1},\eta_N^+(a)] \bigr\| \leq 1 \,\bigr\}
\\
&= \sup_{a\in\sA_N}
\bigl\{\,\bigl| \psi^N_{(\vf,\th)}(a) - \psi^N_{(\vf',\th')}(a) \bigr|
: \bigl\| [D_{N+1}, \eta_N^+(a)] \bigr\| \leq 1 \,\bigr\}
\\
&\geq \sup_{a\in\sA_N}
\bigl\{\,\bigl| \psi^N_{(\vf,\th)}(a) - \psi^N_{(\vf',\th')}(a) \bigr|
: \|[D_N,a]\| \leq 1 \,\bigr\}
\\
&=d_N(\psi^N_{(\vf,\th)},\psi^N_{(\vf',\th')}) \;.
\end{align*}
The first inequality follows since the supremum over the range of
$\eta_N^+$ in $\sA_{N+1}$ is smaller than the supremum over the whole
$\sA_{N+1}$. In the next line \eqref{eq:step-psi} is used; and we get
the second inequality from~\eqref{eq:step-D}.
\end{proof}

\begin{remk} 
\label{rk:rhoN-grows}
The calculation in the proof of Prop.~\ref{pr:dN-grows} can be 
adapted to establish that
\begin{equation}
\label{eq:rhoN-grows} 
\rho_{N+1}(\th - \th') \geq \rho_N(\th - \th'),
\word{for}  \th,\th' \in [0,\pi].
\end{equation}
For that, just restrict $a \in \sA_N$ to (be self-adjoint and) lie in 
the diagonal subalgebra $\sB_N$. The only thing to note that is that 
$\eta_N^+$ maps $\sB_N$ into a non-diagonal subalgebra of $\sA_{N+1}$;
but the notion of diagonal subalgebra is in any case basis-dependent. 
It is enough to replace $\sB_{N+1}$ by a conjugate subalgebra that
includes $\eta_N^+(\sB_N)$, after conjugating $\sA_{N+1}$ by a 
unitary operator commuting with the $SU(2)$ action via 
$\ad\pi_{j+\half}$. This rotates the basis vectors in $V_{j+\half}$,
in such a way that the coherent states $\psi^{N+1}_{(\vf,\th)}$ are 
unchanged. Thus also, $\rho_{N+1}(\th - \th')$ is unchanged, and 
\eqref{eq:rhoN-grows} holds.
\end{remk}


\subsection{Upper and lower bounds and the large $N$ limit} 
\label{ssc:bounds}

\begin{prop} 
\label{pr:bounds}
The following inequalities hold, for all
$(\vf,\th), (\vf',\th')\in \bS^2$:
\begin{equation}
\label{eq:sandwich} 
\rho_N(\th - \th') \leq d_N(\psi^N_{(\vf,\th)}, \psi^N_{(\vf',\th')})
\leq d_\geo \bigl( (\vf,\th),(\vf',\th') \bigr),
\end{equation}
where $\rho_N(\th)$ is the auxiliary distance~\eqref{eq:aux-dist} and
$d_\geo$ is the geodesic distance for the round metric of~$\bS^2$. In
particular,
\begin{equation}
\label{eq:bounds} 
\rho_N(\th) \leq d_N(\psi^N_{(0,\th)}, \psi^N_{(0,0)}) \leq \th.
\end{equation}
\end{prop}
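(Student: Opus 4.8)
The plan is to reduce everything, via the $SU(2)$-invariance of Lemma~\ref{lm:d-round}, to a single meridian, and to prove the two bounds there. Given points $p = (\vf,\th)$ and $q = (\vf',\th')$, I would choose a rotation $g \in SU(2)$ carrying $q$ to the north pole $(0,0)$ and $p$ to the point $(0,\gamma)$ on the meridian $\vf = 0$, where $\gamma := d_\geo(p,q) \in [0,\pi]$. This is possible because $SU(2)$ acts transitively and preserves geodesic distance, so the connecting geodesic gets straightened onto a meridian. Since the coherent states satisfy $g_*\psi^N_{(\vf,\th)} = \psi^N_{g\cdot(\vf,\th)}$, Lemma~\ref{lm:d-round} gives the identity $d_N(\psi^N_{(\vf,\th)}, \psi^N_{(\vf',\th')}) = d_N(\psi^N_{(0,\gamma)}, \psi^N_{(0,0)})$. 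Thus it suffices to prove the particular case~\eqref{eq:bounds} at colatitude $\gamma$, namely $\rho_N(\gamma) \le d_N(\psi^N_{(0,\gamma)}, \psi^N_{(0,0)}) \le \gamma$, directly, and then transfer it back.

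For the upper bound I would integrate along the meridian. Fix a self-adjoint $a \in \sA_N$ with $\|[D_N,a]\| \le 1$ and set $f(s) := \psi^N_{(0,s)}(a)$, a smooth \emph{real} function of $s \in [0,\pi]$. Evaluating~\eqref{eq:psi-Ea} at $\vf = 0$ and taking real parts (legitimate since $f$ is real for $a = a^*$) yields $f'(s) = \mathrm{Re}\,\psi^N_{(0,s)}([E,a])$, the $\cot\th$ contribution surviving only in the imaginary part, which is discarded. Because states have norm~$1$ and by Lemma~\ref{lm:ineq},
$$
|f'(s)| \le \bigl|\psi^N_{(0,s)}([E,a])\bigr| \le \|[E,a]\| \le \|[D_N,a]\| \le 1.
$$
Integrating from $0$ to $\gamma$ gives $|f(\gamma) - f(0)| \le \gamma$ uniformly in $a$, hence $d_N(\psi^N_{(0,\gamma)}, \psi^N_{(0,0)}) \le \gamma$. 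I expect this step to be the crux of the argument: the whole point is that \emph{along the meridian} only the real part of the single commutator $[E,a]$ enters, so Lemma~\ref{lm:ineq} applies without the factor-of-$2$ loss one would incur by bounding the full spherical gradient $|\nb f|^2 = |\psi([E,a])|^2 + |\psi([H,a])|^2$.

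For the lower bound, the supremum defining $\rho_N(\gamma)$ in~\eqref{eq:aux-dist} ranges over the diagonal self-adjoint matrices $\sB_N$, a subset of the self-adjoint matrices over which $d_N$ is a supremum; hence $\rho_N(\gamma) \le d_N(\psi^N_{(0,\gamma)}, \psi^N_{(0,0)})$ at once. Together with the previous paragraph this proves~\eqref{eq:bounds}. To upgrade to the general lower bound in~\eqref{eq:sandwich}, I would note that $\rho_N$ is an even function and, by Lemma~\ref{lm:deriv}, non-decreasing on $[0,\pi]$, while the spherical law of cosines $\cos\gamma = \cos\th\cos\th' + \sin\th\sin\th'\cos(\vf-\vf') \le \cos(\th-\th')$ forces $\gamma \ge |\th - \th'|$, so $\rho_N(\gamma) \ge \rho_N(\th-\th')$.

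Assembling the pieces through the invariance identity of the first paragraph then yields the full chain
$$
\rho_N(\th-\th') \le \rho_N(\gamma) \le d_N(\psi^N_{(0,\gamma)},\psi^N_{(0,0)}) = d_N(\psi^N_{(\vf,\th)},\psi^N_{(\vf',\th')}) \le \gamma = d_\geo\bigl((\vf,\th),(\vf',\th')\bigr),
$$
which is exactly~\eqref{eq:sandwich}, with~\eqref{eq:bounds} recovered by taking $\vf=\vf'=\th'=0$.
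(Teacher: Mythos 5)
Your proof is correct, and while it shares the paper's overall skeleton (reduce via the $SU(2)$-invariance of Lemma~\ref{lm:d-round}, integrate the infinitesimal relations \eqref{eq:psi} against Lemma~\ref{lm:ineq} for the upper bound, restrict to the diagonal subalgebra $\sB_N$ for the lower bound), both key steps are executed differently, and the second difference is substantive. For the upper bound, the paper rotates the two points onto the \emph{equator} and integrates \eqref{eq:psi-Ha}, so the derivative along the path is $\ii\,\psi^N_{(\al,\pi/2)}([H,a])$ and only $\|[H,a]\|\leq\|[D_N,a]\|$ is needed, with no singular coefficient ever appearing; you rotate onto a \emph{meridian} and integrate \eqref{eq:psi-Ea}, which forces the extra observation that $\psi^N_{(0,s)}(a)$ is real for $a=a^*$, so taking real parts discards the $\ii\cot\th\,\pd{}{\vf}$ term. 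That observation is valid, and the two reductions are equivalent in content; the equator is simply the frictionless choice. For the lower bound in \eqref{eq:sandwich}, the paper restricts the supremum to diagonal matrices for the \emph{original} pair and then identifies the resulting supremum with $\rho_N(\th-\th')$; that identification is stated as an equality but is not justified there, and is in fact generally false --- already for $N=1$ the diagonal supremum between $\psi^1_{(0,\th)}$ and $\psi^1_{(0,\th')}$ equals $|\sin^2\tfrac{\th}{2}-\sin^2\tfrac{\th'}{2}|$, whereas $\rho_1(\th-\th')=\sin^2\tfrac{\th-\th'}{2}$; only the inequality $\geq$ holds and is what the proof needs. Your route --- rotate the pair to $(0,0),(0,\gamma)$ with $\gamma=d_\geo$, get $\rho_N(\gamma)\leq d_N$ directly from the definition \eqref{eq:aux-dist}, then descend to $\rho_N(\th-\th')$ using evenness of $\rho_N$ (clear from \eqref{eq:rho-N}), monotonicity from Lemma~\ref{lm:deriv}, and $\gamma\geq|\th-\th'|$ from the spherical law of cosines --- is slightly longer but fully rigorous, and in fact patches this small gap in the paper's own argument.
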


\begin{proof}
Due to Lemma~\ref{lm:d-round}, the second inequality
in~\eqref{eq:sandwich} involves two $SU(2)$-invariant expressions. It
is then enough to prove it when $(\vf',\th') = (0,\pihalf)$ and
$(\vf,\th) = (\vf,\pihalf)$. We thus need to prove that
$$
d_N(\psi^N_{(\vf,\pihalf)}, \psi^N_{(0,\pihalf)}) \leq  |\vf| 
\word{for all}  -\pi < \vf \leq \pi.
$$
Integrating \eqref{eq:psi-Ha}, we find
$$
\psi^N_{(\vf,\pihalf)}(a) - \psi^N_{(0,\pihalf)}(a)
= \ii \int_0^\vf \psi^N_{(\al,\pihalf)}([H,a]) \,d\al \,,
$$
and since $|\om(A)| \leq \|A\|$ for any state $\om$ and operator~$A$,
we obtain, using Lemma~\ref{lm:ineq}:
$$
|\psi^N_{(\vf,\pihalf)}(a) - \psi^N_{(0,\pihalf)}(a)|
\leq \|[H,a]\|\, \biggl| \int_0^\vf d\al \biggr| = |\vf|\, \|[H,a]\|
\leq |\vf|\, \|[D_N,a]\|.
$$
This proves the upper bound in~\eqref{eq:sandwich}. That
of~\eqref{eq:bounds} follows from
$d_\geo\bigl((0,\th),(0,0)\bigr) = \th$.

A lower bound for the distance is given by the supremum over diagonal
matrices:
$$
d_N(\psi^N_{(\vf,\th)}, \psi^N_{(\vf',\th')})
\geq \sup_{a=a^*\in\sB_N}
\bigl\{\,\bigl| \psi^N_{(\vf,\th)}(a) - \psi^N_{(\vf',\th')}(a) \bigr|
: \|[D_N,a]\| \leq 1 \,\bigr\}.
$$
Since $\psi^N_{(\vf,\th)}(a)$ is independent of~$\vf$ 
for any diagonal~$a$, we arrive at
$$
d_N(\psi^N_{(\vf,\th)}, \psi^N_{(\vf',\th')})
\geq \sup_{a=a^*\in\sB_N}
\bigl\{\, \bigl| \psi^N_{(0,\th)}(a) - \psi^N_{(0,\th')}(a) \bigr|
: \|[D_N,a]\| \leq 1 \,\bigr\}
= \rho_N(\th - \th').
\eqno \qed
$$
\hideqed
\end{proof}

For $0 < \th < \pi$, neither the upper nor the lower bound
in~\eqref{eq:bounds} is sharp. On the other hand,
$d_N(\psi^N_{(0,\pi)}, \psi^N_{(0,0)}) = \rho_N(\pi)$, since the
formula \eqref{eq:rho-N} coincides with \eqref{eq:diameter} when
$\th = \pi$. Thus the lower bound is sharp for $\th = \pi$. In
Figure~\ref{fig:rho-asymp} we show a plot of the upper bound (straight
line) and lower bounds $\rho_N$ for $N = 10,30,500$ (nondecreasing
with~$N$). It would seem that $\th - \rho_N(\th)$ has its maximum at
$\th = \pi$. Figure~\ref{fig:rho-drop} plots $\th - \rho_N(\th)$ for
$N = 5,10,20,30$ (decreasing with~$N$). This suggests how to prove our
final result.

\begin{figure}[htb] 
\centering
\subfloat[Plot of $\th$ and $\rho_N(\th)$ for $N = 10,30,500$.]{
\includegraphics[width=6.5cm]{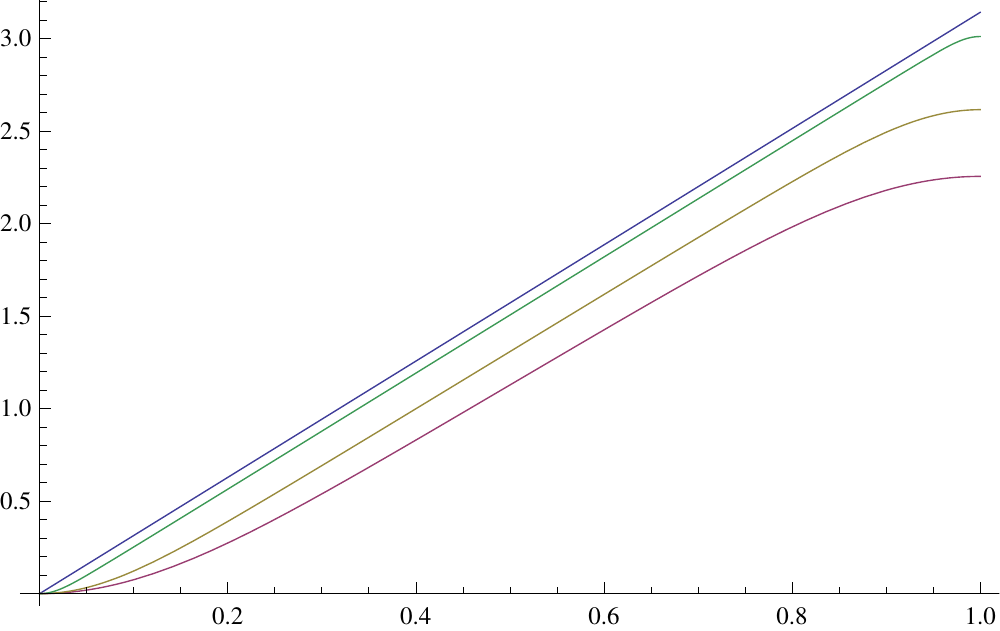}\label{fig:rho-asymp}}
\hspace{1cm}
\subfloat[Plot of $\th - \rho_N(\th)$ for $N = 5,10,20,30$.]{
\includegraphics[width=6.5cm]{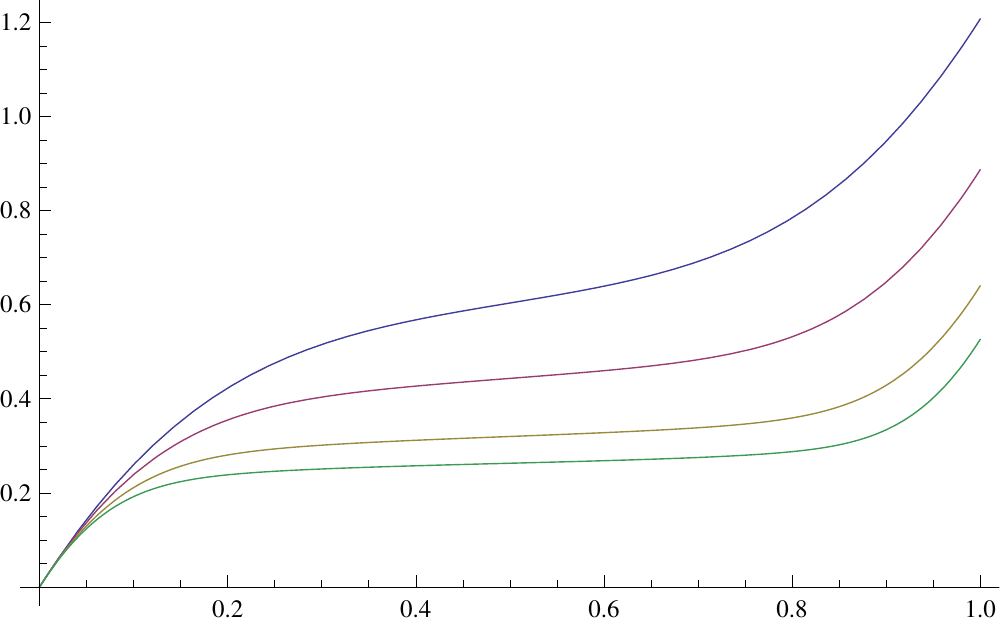}\label{fig:rho-drop}}
\caption{Comparison of $\rho_N(\th)$ with~$\th$. The abscissa is
$x = \pi\th$.}
\end{figure}

\begin{prop} 
\label{pr:finale}
As $N \to \infty$, the sequence $\rho_N(\th)$ is uniformly convergent
to $\th$ in $[0,\pi]$. Therefore,
$$
\lim_{N\to\infty} d_N(\psi^N_{(\vf,\th)}, \psi^N_{(\vf',\th')})
= d_\geo \bigl( (\vf,\th), (\vf',\th') \bigr).
$$
\end{prop}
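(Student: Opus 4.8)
The plan is to establish the uniform convergence $\rho_N \to \th$ first, and then deduce the distance limit from it together with the sandwich of Prop.~\ref{pr:bounds} and the $SU(2)$-invariance of Lemma~\ref{lm:d-round}. The guiding observation, already suggested by Figure~\ref{fig:rho-drop}, is that the gap $\th - \rho_N(\th)$ is \emph{monotone} in~$\th$: by Lemma~\ref{lm:deriv} its derivative $1 - \rho_N'(\th)$ is nonnegative on $[0,\pi]$, so the gap is nondecreasing and attains its maximum at $\th = \pi$. Since \eqref{eq:rho-N} reduces at $\th = \pi$ to the north--south value $\rho_N(\pi) = \sum_{k=1}^N (k(N-k+1))^{-1/2}$ of \eqref{eq:diameter}, this yields the clean bound
\[
0 \le \th - \rho_N(\th) \le \pi - \rho_N(\pi)
= \pi - \sum_{k=1}^N \frac{1}{\sqrt{k(N-k+1)}}, \qquad \th \in [0,\pi].
\]
Thus the entire uniform-convergence claim collapses to a single scalar limit, namely $\rho_N(\pi) \to \pi$.

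To prove that limit I would recognize $\rho_N(\pi)$ as a Riemann sum. Writing $M = N+1$, one has $\rho_N(\pi) = \tfrac1M \sum_{k=1}^{M-1} g(k/M)$ with $g(x) = (x(1-x))^{-1/2}$, sampled at the interior nodes $k/M$, whose limit is $\int_0^1 g(x)\,dx = \bigl[ 2\arcsin\sqrt{x}\,\bigr]_0^1 = \pi$. The only subtlety is the integrable singularities of $g$ at the endpoints $0,1$; these are handled by the monotonicity of $g$ on each half of $[0,1]$, comparing the partial sums with the integrals $\int_{1/M}^{1/2} g$ and $\int_0^{1/2} g$, which differ by $\int_0^{1/M} g = 2\arcsin\sqrt{1/M} \to 0$, and symmetrically near~$1$. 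This endpoint bookkeeping is the one genuinely technical step, but it is entirely routine. Combined with the displayed inequality it gives $\|\th - \rho_N\|_\infty = \pi - \rho_N(\pi) \to 0$, that is, uniform convergence on $[0,\pi]$.

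For the distance statement, fix $p = (\vf,\th)$ and $q = (\vf',\th')$ and set $\delta := d_\geo(p,q) \in [0,\pi]$. Because any two points of $\bS^2$ lie on a common great circle, there is a rotation carrying $p,q$ onto the meridian $\vf = 0$ at co-latitudes $\th_2$ and $\th_1 = \th_2 + \delta$ (place the minor arc symmetrically about the equator). By the $SU(2)$-invariance of Lemma~\ref{lm:d-round} this rotation leaves $d_N(\psi^N_p,\psi^N_q)$ unchanged, so it suffices to treat the meridian configuration, where the sandwich \eqref{eq:sandwich} of Prop.~\ref{pr:bounds} reads
\[
\rho_N(\delta) \le d_N\bigl(\psi^N_{(0,\th_1)}, \psi^N_{(0,\th_2)}\bigr)
\le d_\geo\bigl((0,\th_1),(0,\th_2)\bigr) = \delta ,
\]
since along a meridian the geodesic distance is exactly the difference of co-latitudes. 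Letting $N \to \infty$ and invoking the uniform convergence $\rho_N(\delta) \to \delta$ just proved squeezes the middle term to $\delta = d_\geo(p,q)$, which is the assertion.

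The main obstacle is thus essentially the evaluation $\rho_N(\pi) \to \pi$; everything else is a short consequence of results already in hand (Lemma~\ref{lm:deriv} for the monotone gap, Prop.~\ref{pr:bounds} for the sandwich, and Lemma~\ref{lm:d-round} for reducing an arbitrary pair of points to a meridian). It is worth noting that the monotonicity of $\rho_N$ in~$N$ from Remark~\ref{rk:rhoN-grows} offers an alternative finish: once $\rho_N(\pi) \to \pi$ is known, the $1$-Lipschitz functions $\rho_N$ increase to the continuous limit $\th$ on the compact interval $[0,\pi]$, and Dini's theorem upgrades pointwise convergence to uniform convergence directly. Either route relies on the same scalar limit as its sole analytic input.
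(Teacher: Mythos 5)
Your proof is correct and follows the same skeleton as the paper's: Lemma~\ref{lm:deriv} makes the gap $\th - \rho_N(\th)$ nondecreasing, so uniform convergence reduces to the single scalar limit $\rho_N(\pi) \to \pi$, and the distance statement then follows from the sandwich \eqref{eq:sandwich} of Prop.~\ref{pr:bounds} together with the $SU(2)$-invariance of Lemma~\ref{lm:d-round} (your explicit meridian placement is exactly what the paper's appeal to invariance does implicitly). The one genuine difference is how the scalar limit is evaluated. The paper works with the odd-$N$ subsequence, where the symmetry of $(x(N-x+1))^{-1/2}$ about $x = \half(N+1)$ yields the closed-form bound $\rho_N(\pi) \geq 2\arcsin\frac{N-1}{N+1}$ via the substitution $x = \half(N+1)(1+\sin\xi)$, and it must then invoke the monotonicity of $\rho_N(\pi)$ in~$N$ (Remark~\ref{rk:rhoN-grows}) to promote the subsequence limit to the full sequence. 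You instead normalize by $M = N+1$ and read $\rho_N(\pi) = \frac{1}{M}\sum_{k=1}^{M-1} g(k/M)$, with $g(x) = (x(1-x))^{-1/2}$, as a Riemann sum for $\int_0^1 g(x)\,dx = \pi$, taming the endpoint singularities by monotonicity of $g$ on each half of $[0,1]$. This is the same integral-comparison idea in different clothing, but it treats all $N$ at once, so your main argument is independent of Remark~\ref{rk:rhoN-grows} (you use it only in the optional Dini's-theorem variant); what the paper's route buys in exchange is the clean $\arcsin$ estimate with no endpoint bookkeeping.
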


\begin{proof}
Let $f_N(\th) := \th - \rho_N(\th)$. Clearly $f_N(0) = 0$, and
$f'_N(\th) \geq 0$ by Lemma~\ref{lm:deriv}. Hence $f_N(\th)$ is a 
nondecreasing positive function of~$\th$ for each~$N$, and
$$
\|\th - \rho_N(\th)\|_\infty = \sup_{\th\in[0,\pi]} f_N(\th)
\leq f_N(\pi) = \pi - \rho_N(\pi).
$$
Therefore, the uniform convergence
$\lim_{N\to\infty} \|\th - \rho_N(\th)\|_\infty = 0$ holds if and only
if the diameter converges to~$\pi$, i.e., 
$\lim_{N\to\infty} \rho_N(\pi) = \pi$.

The formula for $\rho_N(\pi)$ is given by \eqref{eq:diameter}. The
sequence $\rho_N(\pi)$ is bounded, $\rho_N(\pi) \leq \pi$, and is
nondecreasing by Remark~\ref{rk:rhoN-grows}. Hence it is convergent,
and the limit can be computed using any subsequence. It is thus enough
to prove that $\rho_N(\pi) \geq c_N$, where $c_N \to \pi$ as
$N \to \infty$.

We consider the subsequence with odd $N$ only. The function
$(x(N - x + 1))^{-1/2}$ is positive for $1 \leq x \leq N$, symmetric
about $x = \half(N + 1)$, and monotonically decreasing for 
$1 \leq x \leq \half(N + 1)$. Hence
$$
\rho_N(\pi) = 2 \sum_{k=1}^{\half(N-1)} \frac{1}{\sqrt{k(N - k + 1)}}
+ \frac{2}{N + 1}
\geq 2 \int_1^{\half(N+1)} \frac{dx}{\sqrt{x(N - x + 1)}} \,.
$$
Substituting $x =: \half(N + 1)(1 + \sin\xi)$, so that
$d\xi = dx/\sqrt{x(N - x + 1)}\,$, we obtain
$$
\rho_N(\pi) \geq 2 \arcsin \frac{N - 1}{N + 1} \,.
$$
The right hand side converges monotonically to~$\pi$ as $N\to\infty$,
thus $\lim_{N\to\infty} \rho_N(\pi) = \pi$ through odd~$N$, and so, as
noted above, through all~$N$. (A slightly modified estimate gives
$\lim_{N\to\infty} \rho_N(\pi) = \pi$ through even~$N$, directly,
without using Remark~\ref{rk:rhoN-grows}.) This proves the uniform
convergence $\rho_N(\th) \to \th$.

The estimate~\eqref{eq:bounds} now shows that
$d_N(\psi^N_{(0,\th)},\psi^N_{(0,0)})$ is uniformly convergent
to~$\th$; by $SU(2)$-invariance, 
$d_N(\psi^N_{(\vf,\th)},\psi^N_{(\vf',\th')})$ converges to
$d_\geo \bigl((\vf,\th), (\vf',\th')\bigr)$ uniformly on~$\bS^2$.
\end{proof}


\paragraph{Acknowledgments}

FL acknowledges support by CUR Generalitat de Catalunya under project
FPA2010--20807. FD and FL acknowledge the \textit{Faro} project
\textit{Algebre di Hopf}
of the Universit\`a di Napoli \textit{Federico II}. JCV thanks the
Universit\`a di Napoli for warm hospitality and acknowledges support
from the Vicerrector\'ia de Investigaci\'on of the Universidad de
Costa~Rica.


\end{document}